\newdimen\mainfontsize \mainfontsize=1\@ptsize pt
\renewcommand\@makefntext[1]{%
  \noindent\makebox[0.5em][r]{\@makefnmark}#1}
\def\blfootnote{\xdef\@thefnmark{}\@footnotetext}
\newtheorem{theorem}{Theorem}[section]
\newtheorem{corollary}[theorem]{Corollary}
\newtheorem{lemma}[theorem]{Lemma}
\newtheorem{proposition}[theorem]{Proposition}
\theoremstyle{remark}
\theoremstyle{definition}
\newtheorem{remark}[theorem]{Remark}
\newtheorem{assumption}[theorem]{Assumption}
\newcommand{\ud}{\,\mathrm{d}}
\newcommand{\vd}{\mathrm{d}}
\newcommand{\R}{\mathbb{R}}
\newcommand{\F}{\mathcal{F}}
\newcommand{\E}{\mathbb E}
\newcommand{\N}{\mathbb N}
\newcommand{\lt}{\left}
\newcommand{\rt}{\right}
\newcommand{\pt}{\partial}
\def\P{{\mathbb P}}
\newcommand{\Ind}{\mathbbm{1}}
\newcommand{\supp}{\mathop{\mathrm{supp}}\nolimits}
\newcommand{\Var}{\mathop{\mathrm{Var}}\nolimits}
\title{Asset liquidation under drift uncertainty and regime-switching volatility}       
\author{Juozas Vaicenavicius\footnote{Department of Information Technology, Uppsala University, Box 337, 751 05 Uppsala, Sweden. 
\mbox{Email: \href{mailto:juozas.vaicenavicius@it.uu.se}{\nolinkurl{juozas.vaicenavicius@it.uu.se} }} 
%Web: \url{juozas.vaicenavicius.com}
%\href{juozas.vaicenavicius.com}{www.juozas.vaicenavicius.com}
} 
}
\date{Tuesday 31$^{\mathrm{st}}$ July, 2018}
\begin{document}

%----------------------------
% Manual TITLE, AUTHOR, DATE
%----------------------------
%
%\begin{center}
%{
%%? For vertical spacing
%~\\[0.1in]
%\LARGE Title
%%\\[.1in]  second line
%}
%\end{center}
%\vspace{.1 true in}
%\begin{center}
%{\Large Juozas Vaicenavicius}\footnote{Department of Mathematics, Uppsala University, Box 480, 751 06 Uppsala, Sweden. 
%\mbox{Email: \href{mailto:juozas.vaicenavicius@math.uu.se}{\nolinkurl{juozas.vaicenavicius@math.uu.se} }} 
%%Web: \url{juozas.vaicenavicius.com}
%%\href{juozas.vaicenavicius.com}{www.juozas.vaicenavicius.com}
%} 
%%%? Uncomment if a second author needed.
%%and 
%%{\Large J. Vaicenavicius}\footnote{Department of Mathematics, Uppsala University, Box 480, 751 06 Uppsala, Sweden. \quad juozas.vaicenavicius@math.uu.se}
%\end{center}
%\vspace{.1 true in}
%\centerline{\large \today}
%\par

%%%%%%%%%%%%%%%
%% - BODY - 
%%%%%%%%%%%%%%%
\maketitle
%Abstract - uncomment if needed
\begin{abstract}
Optimal liquidation of an asset with unknown constant drift and stochastic regime-switching volatility is studied. The uncertainty about the drift is represented by an arbitrary probability distribution; the stochastic volatility is modelled by $m$-state Markov chain. Using filtering theory, an equivalent reformulation of the original problem as a four-dimensional optimal stopping problem is found and then analysed by constructing approximating sequences of three-dimensional optimal stopping problems. 
An optimal liquidation strategy and various structural properties of the problem are determined. Analysis of the two-point prior case is presented in detail, building on which, an outline of the extension to the general prior case is given.

\smallskip
\smallskip
\smallskip
\noindent
\textit{MSC 2010 subject classifications:} primary 60G40; secondary 91G80, 60J25.

\noindent
\textit{Keywords and phrases:} optimal liquidation, drift uncertainty, regime-switching volatility, sequential analysis, optimal stopping, stochastic filtering.
\end{abstract}

%\tableofcontents
%\listoftodos

\section{Introduction}

Selling is a fundamental and ubiquitous economic operation. As the prices of goods fluctuate over time, `What is the best time to sell an asset to maximise revenue?'  qualifies as a basic question in Finance. Suppose that an asset needs to be sold before a known deterministic time $T>0$ and that the only source of information available to the seller is the price history. A natural mathematical reformulation of the aforementioned optimal selling question is to find a selling time $\tau^{*} \in \mathcal{T}_{T}$ such that
\begin{equation} \label{E:sell}
	\E[S_{\tau^*}] = \sup_{\tau \in \mathcal{T}_{T}} \E[S_\tau],
\end{equation}
where $\{S_{t}\}_{t\geq0}$ denotes the price process and $\mathcal{T}_{T}$ denotes the set of stopping times with respect to the price process $S$. 

Many popular continuous models for the price process are of the form 
\begin{IEEEeqnarray}{rCl} \label{E:Stock1}
\ud S_t = \alpha S_t \ud t +\sigma(t) S_t \ud W_t,
\end{IEEEeqnarray} 
where $\alpha \in \R$ is called the \emph{drift}, and $\sigma \geq 0$ is known as the \emph{volatility process}. Imposing simplifying assumptions that the volatility is independent of $W$ as well as time-homogeneous, an $m$-state time-homogeneous Markov chain stands out as a basic though still rather flexible stochastic volatility model (proposed in \cite{MKR}), which we choose to use in this article.  The flexibility comes from the fact that we can choose the state space as well as the transition intensities between the states. 

Though the problem \eqref{E:sell} in which $S$ follows \eqref{E:Stock1} %with time-homogenous Markov chain volatility 
is well-posed mathematically, from a financial point of view, the known drift assumption is widely accepted to be unreasonable (e.g.~see \cite[Section 4.2 on p.~144]{R13}) and needs to be relaxed. Hence, using the Bayesian paradigm, we model the initial uncertainty about the drift by a probability distribution (known as the \emph{prior} in Bayesian inference), which incorporates all the available information about the parameter and its uncertainty (see \cite{EV16} for more on the interpretation of the prior). If the quantification of initial uncertainty is subjective, then the prior represents one's beliefs about how likely the drift is to take different values. To be able to incorporate arbitrary prior beliefs, we set out to solve the optimal selling problem \eqref{E:sell} under an arbitrary prior for the drift.

In the present paper, we analyse and solve the asset liquidation problem \eqref{E:sell} in the case when $S$ follows \eqref{E:Stock1} with $m$-state time-homogeneous Markov chain volatility and unknown drift, the uncertainty of which is modelled by an arbitrary probability distribution. The first time a particular four-dimensional process hits a specific boundary determining the stopping set is shown to be optimal. This stopping boundary has attractive monotonicity properties and can be found using the approximation procedure developed.   

Let us elucidate our study of the optimal selling problem in more depth. Using the nonlinear filtering theory, the original selling problem with parameter uncertainty is rewritten as an equivalent optimal stopping problem of a standard form (i.e.~without unknown parameters). In this new optimal stopping problem, the posterior mean serves as the underlying process and acts as a stochastic creation rate; the payoff function in the problem is constant. The posterior mean is shown to be the solution of an SDE depending on the prior and the whole volatility history. Embedding of the optimal stopping problem into a Markovian framework is non-trivial because the whole posterior distribution needs to be included as a variable. Fortunately, we show that having fixed the prior, the posterior is fully characterised by only two real-valued parameters: the posterior mean and, what we call, the \emph{effective learning time}. As a result, we are able to define an associated Markovian value function with four underlying variables (time, posterior mean, effective learning time, and volatility) and study the optimal stopping problem as a four-dimensional Markovian optimal stopping problem (the volatility takes values in a finite set, but slightly abusing terminology, we still call it a dimension). Exploiting that the volatility is constant between the regime switches, we construct $m$ sequences of simpler auxiliary three-dimensional Markovian optimal stopping problems whose values in the limit converge monotonically to the true value function. The main advantage of this approximating sequence approach comparing with tackling the full variational inequality of the problem directly is that dealing with the analytically complicated coupled system is avoided altogether. Instead only much simpler standard uncoupled free-boundary problems need to be analysed or solved numerically to arrive at a desired result. We show that the value function is decreasing in time and effective learning time as well as increasing and convex in posterior mean. The first hitting time of a region specified by a stopping boundary that is a function of time, effective learning time, and volatility is shown to be optimal. The stopping boundary is increasing in time, effective learning time, and is the limit of a monotonically increasing sequence of boundaries from the auxiliary problems. Moreover, the approximation procedure using the auxiliary problems yields a method to calculate the value function as well as the optimal stopping boundary numerically. 

In the two-point prior case, the posterior mean fully characterises the posterior distribution, making the problem more tractable and allowing us to obtain some additional results. In particular, we prove that, under a skip-free volatility assumption, the Markovian value function is decreasing in the volatility and that the stopping boundary is increasing in the volatility. 

In a broader mathematical context, the selling problem investigated appears to be the first optimal stopping problem with parameter uncertainty and stochastic volatility to be studied in the literature. Thus it is plausible that ideas presented herein will find uses in other optimal stopping problems of the same type; for example, in classical problems of Bayesian sequential analysis (e.g.~see \cite[Chapter VI]{PS06}) with stochastically evolving noise magnitude. It is clear to the author that with additional efforts a number of results of the article can be refined or generalised. However, the objective chosen is to provide an intuitive understanding of the problem and the solution while still maintaining readability and clarity. This also explains why, for the most part, we focus on the two-point prior case and outline an extension to the general prior case only at the end.

\subsection{Related literature}

There is a strand of research on asset liquidation problems in models with regime-switching volatility, alas, they either concern only a special class of suboptimal strategies or treat the drift as observable. In \cite{qZ01}, a restrictive asset liquidation problem was proposed and studied; the drift as well as the volatility were treated as unobservable and the possibility to learn about the parameters from the observations was disregarded. The subsequent papers \cite{qZ02}, \cite{qZ06}, \cite{qZ08} explored various aspects of the same formulation. An optimal selling problem with the payoff $e^{-r\tau}(S_{\tau}-K)$ was studied in \cite{bO07} for the Black-Scholes model, in \cite{GZ05} for a two-state regime-switching model, and in \cite{qZ06} for an $m$-state model with finite horizon. In all three cases, the drift and the volatility are assumed to be fully observable.

In another strand of research, the optimal stopping problem \eqref{E:sell} has been solved and analysed in the Black-Scholes model under arbitrary uncertainty about the drift. The two-point prior case was studied in \cite{EL11}, while the general prior case was solved in \cite{EV16} using a different approach. This article can be viewed as a generalisation of \cite{EV16} to include stochastic regime-switching volatility.  Related option valuation problems under incomplete information were studied in \cite{G}, \cite{mV17}, both in the two-point prior case, and in \cite{DMV} in the $n$-point prior case.

The approach we take to approximate a Markovian value function by a sequence of value functions of simpler constant volatility problems was used before in \cite{LW} to investigate a finite-horizon American put problem (also, its slight generalisation) in a regime-switching model with full information. Regrettably, in the case of $3$ or more volatility states, the recursive approximation step in \cite[Section 5]{LW} contains a blunder; we rectify it in Section \ref{ss:approx_seq} of this article. A possible alternative route to analysing and solving the optimal stopping problem is to  analytically tackle the system of variational inequalities directly using weak solutions techniques (e.g., see \cite{BL, hP}), similarly as in  \cite{C11} for American options with regime-switching volatility. Structural and regularity properties would need to be established using PDE techniques. If appropriate theoretical results can be obtained, numerical PDE schemes discussed in \cite{H11} should yield a numerical solution. However, this alternative approach requires a different toolkit, appears to be more demanding analytically, and hence not investigated further in the present article. 

Though it is true that the current paper is a generalisation of \cite{EV16} from constant volatility to the regime-switching stochastic volatility model, the extension is definitely not a straightforward one. Novel statistical learning intuitions were needed, and new proofs were developed to arrive at the results of the paper. 
%(Technically, the finite time horizon turns that problem into a three-dimensional optimal stopping problem as time until expiry also needs to be tracked). 
One of the main insights of the optimal liquidation problem with constant volatility in \cite{EV16} was that the current time and price were sufficient statistics for the optimal selling problem. However, changing the volatility from constant to stochastic makes the posterior distribution of the drift truly dependent on the price path. This raises questions whether an optimal liquidation problem can be treated using the mainstream finite-dimensional Markovian techniques at all, and also whether any of the developments from the constant volatility case can be taken advantage of. In the two-point prior case with regime-switching volatility, the following new insight was key. Despite the posterior being a path-dependent function of the stock price, we can show that the current time, posterior mean and instantaneous volatility (extracted from the price process) are sufficient statistics for the optimal liquidation problem. Alas, for any prior with more than two points in the support, the same triplet is no longer a sufficient statistic. Fortunately, if in addition to the time-price-volatility triplet we introduce an additional statistic, which we name the effective learning time, the resulting $4$-tuple becomes a sufficient statistic for the selling problem under a general prior. Besides these insights, some new technicalities (in particular, Lemma \eqref{L:Indep}) stemming from stochastic volatility had to be resolved to reformulate the optimal selling problem into the standard Markovian form. 

In relation to \cite{LW}, though we employ the same general iterative approximation idea to construct an approximating sequence for the Markovian value function, the particulars, including proofs and results, are notably distinct. Firstly, we work in a more general setting, proving and formulating more abstract as well as, in multiple instances, new type of results. For example, we prove things in the $m$-state rather than the two-state regime-switching model. This allowed us to catch and correct an erroneous construction of the approximating sequence in \cite{LW} for models with more than two volatility states. Moreover, almost all the proofs follow different arguments either because of the structural differences in the selling problem or because we prefer another way, which seems to be more transparent and direct, to arrive at the results. Lastly, many of the results in the present paper are problem-specific and even not depend on the iterative approximation of the value function after all.

The idea to iteratively construct a sequence of auxiliary value functions that converge to the true value function in the limit is generic and has been many times successfully applied to optimal stopping problems with a countable number of discrete events (e.g.~jumps, discrete observations). In the setting with partial observations, an iterative approximation scheme was employed in \cite{BDK06} to study the Poisson disorder detection problem with unknown post-disorder intensity, then later, in \cite{DPS08}, to analyse a combined Poisson-Wiener disorder detection problem,  and, more recently, in \cite{BK15}, to investigate the Wiener disorder detection under discrete observations.
In the fully observable setting, such iterative approximations go back to at least as early as \cite{G86}, which deals with a Markovian optimal stopping problem with a piecewise deterministic underlying. In Financial Mathematics, iteratively constructed approximations were used in \cite{eB09} and \cite{eB11} to study the value functions of finite and perpetual American put options, respectively, for a jump diffusion. Besides optimal stopping, the iterative approximation technique was utilised for the singular control problem \cite{EL14} of optimal dividend policy. 

\section{Problem set-up} \label{S:PF}

We model a financial market on a filtered probability space $(\Omega,\F, \{\mathcal{F}_{t} \}_{t \geq 0}, \P)$ satisfying the usual conditions. Here the measure $\P$ denotes the physical probability measure. The price process is modelled by 

\begin{IEEEeqnarray}{rCl} \label{E:Stock}
\ud S_t = X S_t \ud t +\sigma(t) S_t \ud W_t,
\end{IEEEeqnarray} 
where $X$ is a random variable having probability distribution $\mu$, $W$ is a standard Brownian motion,  and $\sigma$ is a time-homogeneous right-continuous $m$-state Markov chain with a generator $\Lambda = (\lambda_{ij})_{1\leq i,j \leq m}$ and taking values $\sigma_{m} \geq \ldots \geq \sigma_{1}  >0$. Moreover, we assume that $X$, $W$, and $\sigma$ are independent. Since the volatility can be estimated from the observations of $S$ in an arbitrary short period of time (at least in theory), it is reasonable to assume that the volatility process $\{ \sigma(t) \}_{t \geq 0}$ is observable. Hence the available information is modelled by the filtration $\mathbb F^{S, \sigma} = \lt\{ \mathcal F^{S, \sigma}_t \rt\}_{t \geq 0}$ generated by the processes $S$ and $\sigma$ and augmented by the null sets of $\F$. Note that the drift $X$ and the random driver $W$ are not directly observable.

%In other words, the probability measure $\mu$ represents a subjective view of the agent about the mean return rate of the asset. 
 
The optimal selling problem that we are interested in is
\begin{IEEEeqnarray}{rCl} \label{E:OS}
V=\sup_{\tau \in \mathcal{T}_T^{S, \sigma} } \E[S_\tau], 
\end{IEEEeqnarray}
where $\mathcal{T}_T^{S, \sigma}$ denotes the set of  $\mathbb F^{S, \sigma}$-stopping times that are smaller or equal to a prespecified time horizon $T >0$. 

\begin{remark}
It is straightforward to include a discount factor $e^{-r\tau}$ in \eqref{E:OS}. In fact, it simply corresponds to a shift of the prior distribution $\mu$ in the negative direction by $r$.  
\end{remark}

Let $l:=\inf \supp (\mu)$ and $h:=\sup \supp (\mu)$. It is easy to see that if $l \geq0$, then it is optimal to stop at the terminal time $T$. 
Likewise, if $h \leq 0$, then stopping immediately, i.e.~at time zero, is optimal. The rest of the article focuses on the remaining and most interesting case.
\begin{assumption} 
$l < 0 < h$.
\end{assumption}

\subsection{Equivalent reformulation under a measure change}
Let us write $\hat X_{t} := \E [X \,|\, \mathcal{F}^{S, \sigma}_{t} ]$. Then the process
\[ \hat W_t := \int_0^t \frac{1}{\sigma(s)}(X - \hat X_s)  \ud s + W_t ,\]
called the innovation process, is an $\mathbb{F}^{S,\sigma}$-Brownian motion (see \cite[Proposition 2.30 on p.~33]{BC09}). 

\begin{lemma} \label{L:Indep}
The volatility process $\sigma$ and the innovation process $\hat W$ are independent.
\end{lemma}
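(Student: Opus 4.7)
The plan is to establish independence by conditioning on the entire volatility path and then invoking classical filtering theory (with deterministic volatility) in the conditioned probability space. Let $\F^{\sigma}_{\infty} := \sigma(\sigma(s): s \geq 0)$ denote the $\sigma$-algebra generated by the whole volatility trajectory, and work with a regular conditional probability $\P(\cdot \,|\, \F^{\sigma}_{\infty})$.

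First, I would record the consequences of the standing independence assumption. Since $X$, $W$, and $\sigma$ are mutually independent, under $\P(\cdot \,|\, \F^{\sigma}_{\infty})$ the random variable $X$ still has distribution $\mu$, the process $W$ is still a standard Brownian motion, and the two remain independent of each other; only the volatility path is now frozen as a (deterministic) measurable function. Consequently, conditional on $\F^{\sigma}_{\infty}$, the price equation \eqref{E:Stock} becomes an SDE with deterministic coefficient $\sigma(\cdot)$, i.e. the setup of a standard Bayesian filtering problem with unknown drift and known time-varying volatility.

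Second, I would apply the standard innovation-process theorem (e.g.\ \cite[Proposition 2.30]{BC09}) in this conditioned setting to conclude that, for $\P(\cdot \,|\, \F^{\sigma}_{\infty})$-a.e.\ volatility path, the process
\[
\hat W_t \;=\; \int_0^t \frac{1}{\sigma(s)} (X - \hat X_s) \ud s + W_t
\]
is a Brownian motion with respect to the filtration generated by $S$ (augmented by $\F^{\sigma}_{\infty}$). In particular, the conditional law of $\hat W$ given $\F^{\sigma}_{\infty}$ coincides with the Wiener measure and therefore does \emph{not} depend on the particular realisation of the volatility path. Independence then follows from a routine argument: for any bounded measurable functionals $F$ on $C([0,T];\R)$ and $G$ on the volatility path space,
\[
\E\!\left[F(\hat W)\, G(\sigma)\right] = \E\!\left[G(\sigma)\, \E[F(\hat W) \,|\, \F^{\sigma}_{\infty}]\right] = \E[F(\hat W)]\, \E[G(\sigma)],
\]
because the inner conditional expectation equals the deterministic constant $\int F \,\vd W^{\mathrm{Wiener}}$.

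The main technical obstacle is making the pathwise conditioning rigorous: one must choose a version of the regular conditional probability for which the SDE \eqref{E:Stock}, the posterior mean $\hat X$, and the stochastic integral defining $\hat W$ can all be interpreted consistently path-by-path in $\sigma$, and for which the classical innovation theorem applies. This is a standard but somewhat delicate point; it can be handled by using continuous versions of the relevant processes and noting that the filtering equations under known deterministic volatility depend measurably on the volatility path, so that the conclusion ``$\hat W$ is a Brownian motion'' holds simultaneously on a set of full measure in $\F^{\sigma}_{\infty}$.
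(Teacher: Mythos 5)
Your proposal is correct and follows essentially the same route as the paper: both arguments freeze the volatility path, observe that the innovation process is then a Brownian motion whose law (Wiener measure) does not depend on the frozen path, and deduce the factorisation of the joint law. The paper implements the conditioning via the product-space structure $\Omega_{X,W}\times\Omega_{\sigma}$ and Fubini's theorem, which sidesteps the regular-conditional-probability technicalities you flag at the end, but the underlying idea is identical.
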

\begin{proof}
Since $X$, $W$, and $\sigma$ are independent, we can think of $(\Omega, \F, \P)$ as a product space $ \lt( \Omega_{X,W} \times \Omega_{\sigma}, \F_{X,W} \otimes \F_{\sigma}, \P_{X,W} \times \P_{\sigma} \rt)$. Let 
$
A, A' \in \mathcal{B}(\R^{[0,T]}) %\lt\{ \{ f \in \R^{[0,T]} \,:\, (f(t_{1}), \ldots, f(t_{k})) \in H \} \,:\, k \in \N, \, 0 \leq t_{1} < \ldots < t_{k}\leq T, \, H \in \mathcal{B}(\R)^{k} \rt\}.
$.
Then
\begin{IEEEeqnarray*}{rCl}
\P \lt( \hat W \in A, \, \sigma \in A' \rt) &=& \int_{\Omega_{X,W} \times \Omega_{\sigma}} \Ind_{\{ \hat W(\omega_{X,W}, \omega_{\sigma}) \in A, \, \sigma(\omega_{\sigma}) \in A' \}} \ud \lt( \P_{X,W}\times \P_{\sigma} \rt)(\omega_{X,W}, \omega_{\sigma}) \\
&=& \int_{\Omega_{\sigma}} \int_{\Omega_{X, W}} 
\Ind_{\{ \hat W(\omega_{X,W}, \omega_{\sigma}) \in A\}} \Ind_{\{\sigma(\omega_{\sigma}) \in A' \}} \ud \P_{X,W}(\omega_{X,W}) \ud \P_{\sigma}(\omega_{\sigma}) \\
&=& \int_{\Omega_{\sigma}}  \Ind_{\{\sigma(\omega_{\sigma}) \in A' \}} \int_{\Omega_{X, W}} 
\Ind_{\{ \hat W(\omega_{X,W}, \omega_{\sigma}) \in A\}} \ud \P_{X,W}(\omega_{X,W}) \ud \P_{\sigma}(\omega_{\sigma}) \\
&=& \int_{\Omega_{\sigma}} \Ind_{\{ \sigma(\omega_{\sigma}) \in A' \}} \P_{X,W} \lt( \hat W(\cdot, \omega_{\sigma}) \in A\rt) \ud \P_{\sigma}(\omega_{\sigma}) \\
&=& \P \lt( \hat W \in A\rt) \P_{\sigma} \lt( \sigma \in A' \rt) \\
&=& \P \lt( \hat W \in A\rt) \P \lt( \sigma \in A' \rt), \label{E:IndSplit} \IEEEyesnumber
\end{IEEEeqnarray*}
where the penultimate equality is justified by the fact that, for any fixed $\omega_{\sigma}$, the innovation process $\hat W (\cdot, \omega_{\sigma})$ is a Brownian motion under $\P_{X,W}$. Hence from \eqref{E:IndSplit}, the processes $\hat W$ and $\sigma$ are independent.
\end{proof}

Defining a new equivalent measure $\tilde{\P} \sim \P$ on $(\Omega, \F_{T})$ via the Radon-Nikodym derivative
\[ 
\frac{\vd \tilde{\P}}{\vd \P}  = e^{\int_{0}^{T} \sigma(t) \ud \hat{W}_t - \frac{1}{2}\int_{0}^{T}\sigma(t)^2\ud t}
\]
and writing
\begin{IEEEeqnarray*}{rCl}
S_t &=& S_0e^{Xt+ \int_{0}^{t} \sigma(s) \ud W_s-\frac{1}{2} \int_{0}^{t}\sigma(s)^{2} \ud s}\\
&=& S_0e^{\int_0^t\hat X_s \ud s + \int_{0}^{t} \sigma(s) \ud \hat W_s-\frac{1}{2}\int_{0}^{t}{\sigma(s)^2}\ud s},
\end{IEEEeqnarray*}
we have that, for any  $\tau \in \mathcal{T}^{S, \sigma}_{T}$,
\begin{IEEEeqnarray*}{rCl}
 \E \lt[ S_\tau \rt] 
= \tilde{\E} \lt[ S_0e^{ \int_0^\tau  \hat X_{s} \ud s }  \rt]= S_0 \tilde{\E} \lt[e^{ \int_0^\tau  \hat X_{s} \ud s } \rt].
\end{IEEEeqnarray*}
Moreover, by Girsanov's theorem, the process $B_t:= -\int_{0}^{t} \sigma(s) \ud s + \hat{W}_t$ is a $\tilde{\P}$-Brownian motion on $[0, T]$. In addition, Lemma \ref{L:Indep} together with \cite[Proposition 3.13]{BC09} tells us that the law of $\sigma$ is the same under $\tilde{\P}$ and $\P$, as well as that $B$ and $\sigma$ are independent under $\tilde{\P}$.

Without loss of generality, we set $S_0=1$ throughout the article, so the optimal stopping problem \eqref{E:OS} can be cast as
\begin{IEEEeqnarray}{rCl} \label{E:OSN}
V=\sup_{\tau \in \mathcal{T}^{S, \sigma}_T } \tilde{\E}[e^{ \int_0^\tau  \hat X_{s} \ud s }].
\end{IEEEeqnarray}
Between the volatility jumps, the stock price is a geometric Brownian motion with known constant volatility and unknown drift. Hence, by Corollary 3.4 in \cite{EV16}, we have that $\mathbb F^{S, \sigma} = \mathbb F^{\hat X, \sigma}$ and $\mathcal{T}^{S, \sigma}_T=\mathcal{T}^{\hat X, \sigma}_T$, where $\mathbb F^{\hat X, \sigma}$ denotes the usual augmentation of the filtration generated by $\hat X$ and $\sigma$,     also, $\mathcal{T}^{\hat X , \sigma}_{T}$ denotes the set of $\mathbb F^{\hat X, \sigma}$-stopping times not exceeding $T$. As a result, an equivalent reformulation of \eqref{E:OSN} is
\begin{IEEEeqnarray}{rCl} \label{E:OSN2}
V=\sup_{\tau \in \mathcal{T}^{\hat X, \sigma}_T } \tilde{\E}[e^{ \int_0^\tau  \hat X_{s} \ud s }],
\end{IEEEeqnarray}
which we will study in the subsequent parts of the article.

\subsection{Markovian embedding}

In all except the last section of this article, we will focus on the special case when $X$ has a two-point distribution $\mu = \pi \delta_{h} + (1-\pi) \delta_{l}$, where $h>l$, $\pi \in (0,1)$ are constants, and $\delta_{h}, \delta_{l}$ are Dirac measures at $h$ and $l$, respectively. In this special case, expressions are simpler and arguments are easier to follow than in the general prior case; still, most underlying ideas of the arguments are the same. Hence, we choose to understand the two-point prior case first, after which generalising the results to the general prior case will become a rather easy task.

Since the volatility is a known constant between the jump times, using the dynamics of $\hat X$ in the constant volatility case (the equation (3.9) in \cite{EV16}), the process $\hat X$ is a unique strong solution of  
\begin{IEEEeqnarray}{rCl} \label{E:Xhat}
\vd \hat X_{t} &=& \sigma(t) \phi(\hat X_{t}, \sigma(t)) \ud t + \phi(\hat X_{t}, \sigma(t)) \ud B_{t},
\end{IEEEeqnarray} 
where
\begin{IEEEeqnarray*}{rCl}
\phi(x, \sigma) &:=&  \frac{1}{\sigma}(h-x)(x-l). %, \qquad (t,x, \sigma) \in [0,\infty)\times(l,h)\times \{\sigma_{1}, \ldots, \sigma_{m} \}.
\end{IEEEeqnarray*}

Now, we can embed the optimal stopping problem \eqref{E:OSN} into a Markovian framework by defining a Markovian value function
\begin{IEEEeqnarray}{rCl} \label{E:OSM}
v(t,x, \sigma) := \sup_{\tau \in \mathcal T_{T-t} } \tilde{\E}[e^{ \int_0^\tau  \hat X^{t,x, \sigma}_{s} \ud s }], \quad (t,x, \sigma) \in [0,T] \times (l, h) \times \{ \sigma_{1},\ldots, \sigma_{m} \}.
\end{IEEEeqnarray}
Here $\hat X^{t,x, \sigma}$ denotes the process $\hat X$ in \eqref{E:Xhat} started at time $t$ with $\hat X_{t} = x$, $\sigma(t) = \sigma$, and $\mathcal T_{T-t}$ stands for the set of stopping times less or equal to $T-t$ with respect to the usual augmentation of the filtration generated by $\{ \hat X^{t,x,\sigma}_{t+s}\}_{s \geq 0}$ and $\{ \sigma(t+s)\}_{s \geq 0}$.
The formulation \eqref{E:OSM} has an interpretation of an optimal stopping problem with the constant payoff $1$ and the discount rate $-\hat X_{s}$; from now onwards, we will study this discounted problem. The notation $v_{i} := v(
\cdot, \cdot, \sigma_{i})$ will often be used.
\section{Approximation procedure} \label{S:Appr}
It is not clear how to compute $v$ in \eqref{E:OSM} or analyse it directly. Hence, in this section, we develop a way to approximate the value function $v$ by a sequence of value functions, corresponding to simpler constant volatility optimal stopping problems. 

\subsection{Operator $J_{i}$} \label{S:Ji}
For the succinctness of notation, let $\lambda_{i}:=\sum_{j \neq i} \lambda_{ij}$ denote the total intensity with which the volatility jumps from state $\sigma_{i}$. 
Also, let us define
\begin{IEEEeqnarray*}{rCl}
\eta^{t}_{i} &:=& \inf \{ s > 0 \,|\, \sigma(t+s) \neq \sigma(t) = \sigma_{i}\}, 
\end{IEEEeqnarray*}
which is an Exp($\lambda_{i}$)-distributed random variable representing the duration up to the first volatility change if started from the volatility state $\sigma_{i}$ at time $t$. 

Furthermore, let us define an operator $J$ acting on a bounded  $f : [0, T] \times (l,h) \to \R$ by 
\begin{IEEEeqnarray}{rCl}
\IEEEeqnarraymulticol{3}{l}{
(J f)(t, x, \sigma_{i}) } \nonumber \\
&&:= \sup_{\tau \in \mathcal{T}_{T-t}} \tilde{\E} \lt[ e^{ \int_0^\tau  \hat X^{t,x, \sigma_{i}}_{t+s}  \ud s } \Ind_{\{\tau < \eta^{t}_{i} \}} + e^{ \int_0^{\eta^{t}_{i}}  \hat X^{t,x,\sigma_{i}}_{t+s}  \ud s }f(t+\eta^{t}_{i}, \hat X^{t,x, \sigma_{i}}_{t+\eta^{t}_{i}}) \Ind_{\{\tau \geq \eta^{t}_{i}\}} \rt]  \label{E:J_i}\\
&&= \sup_{\tau \in \mathcal{T}_{T-t}} \tilde{\E} \lt[ e^{ \int_0^\tau  \hat X^{t,x, \sigma_{i}}_{t+s} -\lambda_{i} \ud s } + \lambda_{i} \int_{0}^{\tau} e^{ \int_0^u  \hat X^{t,x, \sigma_{i}}_{t+s} -\lambda_{i} \ud s }f(t+u, \hat X^{t,x,\sigma_{i}}_{t+u}) \ud u \rt], \label{E:J_iD}
\end{IEEEeqnarray}
where $\mathcal T_{T-t}$ denotes the set of stopping times less or equal to $T-t$ with respect to the usual augmentation of the filtration generated by $\{ \hat X^{t,x,\sigma_{i}}_{t+s}\}_{s \geq 0}$ and $\{ \sigma(t+s)\}_{s \geq 0}$. To simplify notation, we also define an operator $J_{i}$ by 
\begin{IEEEeqnarray*}{rCl} %\label{E:J_if}
J_{i} f := (Jf)(\cdot, \cdot, \sigma_{i}).
\end{IEEEeqnarray*}
Intuitively, $(J_{i} f)$ represents a Markovian value function corresponding to optimal stopping before $t+ \eta^{t}_{i}$, i.e.~before the first volatility change after $t$, when, at time $t+\eta^{t}_{i} < T$, the payoff $f\lt(t+\eta^{t}_{i}, \hat X^{{t,x, \sigma_{i}}}_{t+\eta^{t}_{i}} \rt)$ is received provided stopping has not occurred yet.

\begin{proposition} \label{T:Ji}
Let $f : [0, T] \times (l,h) \to \R$ be bounded. Then 
\begin{enumerate}[(i)]
\item
$J f$ is bounded;
\item
$f$ increasing in the second variable $x$ implies that $J f$ is increasing in the second variable $x$;
\item \label{T:decrt}
$f$ decreasing in the first variable $t$ implies that $Jf$ is decreasing in the first variable $t$;
\item  \label{T:ic}
$f$ increasing and convex in the second variable $x$ implies that $Jf$ is increasing and convex in the second variable $x$;
\item
$J$ preserves order, i.e. $f_{1} \leq f_{2}$ implies $J f_{1} \leq J f_{2}$;
\item $J f \geq 1$.
\end{enumerate}
\end{proposition}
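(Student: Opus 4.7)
My plan is to treat the six items in increasing order of difficulty. Items (i), (v), and (vi) follow directly from the definition of $J$; (ii) and (iii) rest on pathwise SDE comparison together with the time-homogeneity of $(\hat X, \sigma)$; and (iv) is the main obstacle, requiring a convexity-preservation argument tailored to the filter SDE \eqref{E:Xhat}.

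For (i), since $\phi(x,\sigma_i)$ vanishes at $x=l$ and $x=h$, the endpoints are unattainable and $\hat X^{t,x,\sigma_i}_s\in(l,h)$ for all $s$; hence $e^{\int_0^\tau \hat X_s^{t,x,\sigma_i}\,\ud s}\le e^{hT}$, so the integrand in \eqref{E:J_i} is bounded by $e^{hT}(1+\|f\|_\infty)$. Item (v) is immediate because $f$ enters linearly with a nonnegative coefficient in the integrand of \eqref{E:J_i}. For (vi), taking $\tau=0$ yields payoff $1$ on $\{\eta_i^t>0\}$, which has $\tilde{\P}$-probability one.

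For (ii), the coefficients of \eqref{E:Xhat} are locally Lipschitz on $(l,h)$ and $\hat X$ stays strictly inside this interval, so the pathwise comparison theorem yields $x_1\le x_2 \Rightarrow \hat X_s^{t,x_1,\sigma_i}\le \hat X_s^{t,x_2,\sigma_i}$ a.s. Since $\hat X^{t,x,\sigma_i}$ is the unique strong solution driven by $(B,\sigma)$ and $\phi$ is strictly positive on $(l,h)$, the filtration $\mathbb{F}^{\hat X^{t,x,\sigma_i},\sigma}$ coincides with $\mathbb{F}^{B,\sigma}$, so the admissible class $\mathcal{T}_{T-t}$ is $x$-independent; monotonicity in $x$ of the integrand in \eqref{E:J_i} (using $f$ increasing in $x$) then passes to the supremum. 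For (iii), the time-homogeneity of $(\hat X, \sigma)$ canonically identifies stopping rules at different starting times, and together with $\mathcal{T}_{T-t_2}\subset\mathcal{T}_{T-t_1}$ (via trivial extension) and $f(t_1+\cdot,\cdot)\ge f(t_2+\cdot,\cdot)$, the payoff at $t_1$ dominates the payoff at $t_2$ pathwise, so the same inequality holds for their suprema.

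Part (iv) is where the essential work lies. Between jumps, \eqref{E:J_iD} exhibits $J_i f$ as the value of a pure constant-volatility optimal stopping problem with discount rate $\lambda_i-\hat X_s$, running payoff $\lambda_i f$, and instantaneous payoff $1$. My plan is to extend the constant-volatility convexity argument of \cite{EV16} (which treats the case $f\equiv 0$) by a Bermudan approximation on a time grid, showing inductively that the one-step dynamic-programming operator maps increasing convex functions of $x$ to increasing convex functions. The genuine difficulty is that the flow $x\mapsto\hat X_s^{t,x,\sigma_i}$ is \emph{not} uniformly convex: using the M\"obius representation $\hat X_s^{t,x,\sigma_i}=(h\,p\,L_s+l(1-p))/(p\,L_s+(1-p))$ with $p=(x-l)/(h-l)$ and $L_s$ the $x$-independent likelihood-ratio process, the sign of $\partial_{pp}\hat X_s$ coincides with that of $1-L_s$, which is random. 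Hence a direct Jensen-type argument is unavailable. I would use this explicit representation to compute the second $x$-derivative of the integrand pathwise and verify, via a maximum principle on the associated variational inequality, that convexity propagates through each induction step; monotone passage to the limit together with (v) then transfers convexity to $J_i f$. The ``increasing'' part of (iv) follows immediately from (ii).
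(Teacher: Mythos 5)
Your treatment of (i)--(iii), (v), (vi) and your overall strategy for (iv) --- Bermudan approximation on a dyadic time grid, backward induction showing the one-step dynamic-programming operator preserves monotonicity and convexity in $x$, then passage to the pointwise limit --- is exactly the paper's proof. The one place you diverge is at the crux: the paper does not re-derive the one-step convexity preservation but cites Ekstr\"om--Tysk \cite[Theorem 5.1]{ET08}, which is precisely a convexity-preservation result for the ``term structure equation'' (discounting at rate $x$, here with the additional source term $\lambda_i f$ and killing rate $\lambda_i$, both of which that framework accommodates), together with \cite[Theorem IX.3.7]{RY} for monotonicity. Your diagnosis that the flow $x\mapsto\hat X^{t,x,\sigma_i}_s$ is not convex (the sign of the second derivative being that of $1-L_s$) is correct and is exactly why such a theorem, rather than a Jensen argument, is needed; but your plan to establish it by hand via a pathwise second-derivative computation plus ``a maximum principle on the associated variational inequality'' is only a sketch of what is in effect a re-proof of \cite[Theorem 5.1]{ET08}, and it carries all the weight of part (iv). Two small corrections to that sketch: within a single Bermudan step there is no obstacle, so the relevant object is a linear Cauchy problem for $g(\cdot,\cdot,kT/2^n)$ in \eqref{E:gB}, not a variational inequality --- the obstacle enters only at grid points through $g\vee 1$, which preserves convexity trivially; and the maximum-principle route must contend with the zeroth-order term $(x-\lambda_i)u$ whose coefficient is the spatial variable itself, which is the genuinely delicate feature that \cite{ET08} is designed to handle. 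Citing that result, as the paper does, closes the gap cleanly; otherwise you owe a full proof of the one-step preservation claim.
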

\begin{proof}
All except claim (\ref{T:ic}) are straightforward consequences of the representation \eqref{E:J_iD}. To prove (\ref{T:ic}), we will approximate the optimal stopping problem \eqref{E:J_iD} by Bermudan options. % as in the proof of Proposition \ref{T:PC}.
%\begin{IEEEeqnarray*}{rCl}
%u^{(f)}_{i}(t,x) := \E \lt[ e^{ \int_0^{T-t}  \hat X^{t,x, i}_{t+s} -\lambda_{i} \ud s } + \int_{0}^{T-t} e^{ \int_0^u  \hat X^{t,x, i}_{t+s} -\lambda_{i} \ud s }f(t+u, \hat X^{t,x,i}_{t+u}) \ud u \rt].
%\end{IEEEeqnarray*}
%Also, using the notation $\hat R := - \hat X$, define
%\begin{IEEEeqnarray*}{rCl}
%\tilde{u}^{(f)}_{i}(t,r) := \E \lt[ e^{- \int_0^{T-t}  \hat R^{t,r, i}_{t+s} +\lambda_{i} \ud s } + \int_{0}^{T-t} e^{ -\int_0^u  \hat R^{t,r, i}_{t+s} +\lambda_{i} \ud s }f(t+u, - \hat R^{t,r,i}_{t+u}) \ud u \rt].
%\end{IEEEeqnarray*}

Let $i$ and $n$ be fixed. We will approximate the value function  $J_{i}f$ by a value function $w^{(f)}_{i,n}$ of a corresponding Bermudan problem with stopping allowed only at times $\lt\{ \frac{kT}{2^{n}} \,:\, k\in \{0,1, \ldots, 2^{n}\}\rt\}$.  We define $w^{(f)}_{i,n}$ recursively as follows. First,
\begin{IEEEeqnarray*}{rCl}
w^{(f)}_{i,n}(T,x) := 1.
\end{IEEEeqnarray*}
Then, starting with $k =2^{n}$ and continuing recursively down to $k=1$, we define
\begin{IEEEeqnarray}{rCl} \label{E:wStep}
w^{(f)}_{i,n}(t,x) &=& \lt\{ \begin{array}{ll}
g(t,x, \frac{kT}{2^{n}}), & t \in ( \frac{(k-1)T}{2^{n}}, \frac{kT}{2^{n}}), \\ %k \in \{1,\ldots, n\},\\
g(\frac{(k-1)T}{2^{n}},x, \frac{kT}{2^{n}}) \vee 1, & t = \frac{(k-1)T}{2^{n}}, \\ % k \in \{1,\ldots, n\},
\end{array} \rt.
\end{IEEEeqnarray}
where the function $g$ is given by
\begin{IEEEeqnarray}{rCl} \label{E:gB}
g(t,x, \frac{kT}{2^{n}}) &:=& \tilde{\E} \bigg[ e^{ \int_t^{\frac{kT}{2^{n}}}  \hat X^{t,x, \sigma_{i}}_{s} -\lambda_{i} \ud s } w^{(f)}_{i,n}\lt(\frac{kT}{n},\hat X^{t,x, \sigma_{i}}_{\frac{kT}{2^{n}}} \rt) \nonumber\\
 &&+ \int_{t}^{\frac{kT}{2^{n}}} e^{ \int_t^u  \hat X^{t,x, \sigma_{i}}_{s} -\lambda_{i} \ud s }f(u, \hat X^{t,x,\sigma_{i}}_{u}) \ud u \bigg].
\end{IEEEeqnarray}

Next, we show by backward induction on $k$ that $w^{(f)}_{i,n}$ is increasing and convex in the second variable $x$. Suppose that for some $k \in \{1, 2,\ldots, 2^{n}\}$, the function  $w^{(f)}_{i,n}\lt(\frac{kT}{2^{n}}, \cdot \rt)$ is increasing and convex (the assumption clearly holds for the base step $k=2^{n}$). Let $t\in [ \frac{(k-1)T}{2^{n}}, \frac{kT}{2^{n}})$. Then, since $f$ is also increasing and convex in the second variable $x$, we have that the function $g(t,\cdot, \frac{kT}{2^{n}})$, and so $w^{(f)}_{i,n}(t,\cdot)$, is convex by \cite[Theorem 5.1]{ET08}. %(the condition $\pt^{2}_{2} \psi \geq - \frac{2}{\sigma}$ required by \cite[Theorem 5.1]{ET08} is granted by Proposition 3.6 in \cite{EV16}).
 Moreover, from \eqref{E:gB} and \cite[Theorem~IX.3.7]{RY}, it is clear that $w^{(f)}_{i,n}(t,\cdot)$ is increasing. Consequently, by backward induction, we obtain that the Bermudan value function $w^{(f)}_{i,n}$ is increasing and convex in the second variable.

Letting $n \nearrow \infty $, the Bermudan value $w^{(f)}_{i,n} \nearrow J_{i}f$ pointwise. As a result, $J_{i}f$ is increasing and convex in the second argument, since convexity and monotonicity are preserved when taking pointwise limits.
\end{proof}
%Let 
%\begin{IEEEeqnarray*}{rCl}
%\mathcal{U} &:=& \{ f :[0, T]\times \R \to \R \,:\, f \text{ bounded, decreasing in $t$, increasing and convex in $x$}  \}.
%\end{IEEEeqnarray*}
%\begin{lemma}
%If $f$ is locally $1/2$-H\"older continuous in $t$ uniformly over $x$, then so is $J_{i}f$.
%\end{lemma}
%\begin{proof}
%\end{proof}
%\begin{lemma}
%If $f \in \mathcal{U}$, then $J_{i}f$ is Lipschitz continuous in $x$.
%\end{lemma}
%\begin{proof}
%\end{proof}
%\begin{proposition}
%If $f$ is locally $1/2$-H\"older continuous in $t$ uniformly over $x$ and is also Lipschitz  continuous in $x$, then $J_{i}f$ is the unique \todo{Prove!} solution of the free-boundary problem
%\begin{IEEEeqnarray*}{rCl}
%\footnotesize
%\lt\{
%\begin{array}{ll}
% \pt_{t}g_{i}(t,x) + {\sigma_{i}} \psi(t, x, \sigma_{i}) \pt_{x} g_{i}(t,x) + \frac{1}{2} \psi(t, x, \sigma_{i})^{2} \pt_{xx} g_{i}(t,x) +(x-\lambda_{i})g_{i}(t,x)+\lambda_{i} f(t,x) = 0, & (t,x) \in \mathcal{C}^{J_{i}f}, \\
%g_{i}(t,x) = 1, \quad (t,x) \in [0,T]\times \R \setminus \mathcal{C}^{J_{i}f}. &
%\end{array}
%\rt.
%\end{IEEEeqnarray*}
%\end{proposition}
%\begin{proof}
%\workingremark[inline]{
%Existence and uniqueness is hopefully covered by Friedman (not understood). Verification that the value function satisfies the system is via a standard route.}
%\end{proof}
The sets
\begin{IEEEeqnarray*}{rCl}
 \mathcal{C}^{f}_{i} &:=& \{ (t,x) \in [0,T)\times (l, h) \,:\, (J_{i}f) (t,x) > 1 \}, \IEEEyesnumber \label{E:CntS}\\
  \mathcal{D}^{f}_{i} &:=& \{ (t,x) \in [0,T]\times (l, h) \,:\, (J_{i}f) (t,x) = 1 \} = [0,T]\times (l, h) \setminus \mathcal{C}^{f}_{i},
\end{IEEEeqnarray*}
correspond to continuation and stopping sets for the stopping problem $J_{i}f$ as the next proposition shows.
\begin{proposition}[Optimal stopping time] \label{T:OSTJ}
The stopping time  
\begin{IEEEeqnarray}{rCl} \label{E:Sfi}
\tau^{f}_{\sigma_{i}}(t,x) &=& \inf \{ u \in [0, T-t] \,:\, (t + u, \hat X^{t,x, \sigma_{i}}_{t+u}) \in \mathcal{D}^{f}_{i} \} 
\end{IEEEeqnarray} 
is optimal for the problem \eqref{E:J_iD}.
\end{proposition}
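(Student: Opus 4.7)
The plan is to recognise \eqref{E:J_iD} as a standard finite-horizon Markovian optimal stopping problem for the strong Markov process $(t+u, \hat X^{t,x,\sigma_i}_{t+u})$ evolving according to \eqref{E:Xhat} with constant volatility $\sigma_i$, with bounded running reward $\lambda_i f$, bounded discount rate $\lambda_i - \hat X$, and constant terminal payoff $1$. All ingredients are bounded on $[0,T] \times (l,h)$, so no integrability issues arise and the classical characterisation of the optimal stopping time as the first entry into the stopping set applies, provided $\mathcal{D}^f_i$ is shown to be closed.

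My first step would be to establish lower semi-continuity of $J_i f$ on $[0,T] \times (l,h)$. The coefficients in \eqref{E:Xhat} are locally Lipschitz on compact subsets of $(l,h)$ and $\hat X$ stays in $(l,h)$, so the law of $\hat X^{t,x,\sigma_i}$ depends continuously on the initial data $(t,x)$. Combined with bounded convergence applied to the integrand in \eqref{E:J_iD}, this yields continuity of the reward functional in $(t,x)$ for each fixed admissible stopping policy. LSC of the supremum then follows by the standard $\epsilon$-optimal argument: pick $\tau_\epsilon$ nearly optimal at $(t,x)$, transport it to perturbed starting points $(t_n, x_n) \to (t,x)$, and pass to the limit. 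Hence $\mathcal{D}^f_i$ is closed and $\tau^f_{\sigma_i}(t,x)$ is a well-defined $\mathbb{F}^{\hat X, \sigma}$-stopping time bounded by $T-t$.

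Optimality is then obtained via a Snell envelope / martingale argument. The process
\begin{IEEEeqnarray*}{rCl}
M_u &:=& e^{\int_0^u (\hat X^{t,x,\sigma_i}_{t+s} - \lambda_i) \ud s}\,(J_i f)(t+u, \hat X^{t,x,\sigma_i}_{t+u}) \\
&& {}+ \lambda_i \int_0^u e^{\int_0^v (\hat X^{t,x,\sigma_i}_{t+s} - \lambda_i) \ud s} f(t+v, \hat X^{t,x,\sigma_i}_{t+v}) \ud v
\end{IEEEeqnarray*}
is a $\tilde{\P}$-supermartingale by the dynamic programming principle applied to $J_i f$, and it is a martingale on $[0, \tau^f_{\sigma_i}]$ because on the open continuation set $\mathcal{C}^f_i$ the value $J_i f$ coincides with the value of continuing, ruling out any supermartingale slack before $\tau^f_{\sigma_i}$. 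Optional sampling at $\tau^f_{\sigma_i}$, together with $(J_i f) \equiv 1$ on $\mathcal{D}^f_i$ and the terminal condition $(J_i f)(T, \cdot) = 1$, gives $\tilde{\E}[M_{\tau^f_{\sigma_i}}] = (J_i f)(t,x)$, which is exactly the reward in \eqref{E:J_iD} attained by $\tau = \tau^f_{\sigma_i}$; so this stopping time is optimal.

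The main obstacle I anticipate is the LSC argument for $J_i f$, since $f$ is only assumed bounded and continuity is not automatic; formalising continuous dependence of the law of $\hat X^{t,x,\sigma_i}$ on $(t,x)$ in a way that handles the varying horizon $T-t$ (for example by time-shifting or by a Girsanov comparison between solutions started at $(t,x)$ and $(t',x')$) is the delicate technical point. The martingale identification of $M$ up to $\tau^f_{\sigma_i}$ and the supermartingale property beyond are then routine consequences of the strong Markov property of the state process.
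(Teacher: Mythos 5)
Your proposal is correct and follows essentially the same route as the paper, which simply disposes of this statement by citing the standard Markovian optimal stopping theorem (Theorem D.12 in \cite{KS2}); what you have written is an unpacking of exactly that standard argument (regularity of the value function so that $\mathcal{D}^{f}_{i}$ is closed, then the supermartingale/martingale Snell envelope identification). The technical point you flag --- transporting $\epsilon$-optimal stopping times to perturbed initial data to get lower semicontinuity --- is precisely the kind of detail the cited theorem packages away, so no new idea is needed beyond what you describe.
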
 
\begin{proof}
A standard application of Theorem D.12 in \cite{KS2}.
\end{proof}

\begin{proposition} \label{T:reg}
If a bounded $f :[0,T] \times (l,h)  \to \R$ is decreasing in the first variable as well as increasing and convex in the second, then $J_{i}f$ is continuous.
%\workingremark[inline]{Do I really need all those assumptions on $f$?}
\end{proposition}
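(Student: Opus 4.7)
The plan is to establish continuity of $J_i f$ by first showing separate continuity in each variable and then upgrading to joint continuity. Continuity in the spatial variable is essentially free: Proposition \ref{T:Ji}(\ref{T:ic}) gives convexity of $J_i f$ in $x$ on the open interval $(l, h)$, hence continuity in $x$ for each fixed $t$. The substantive work lies in the time variable.

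To prove continuity in $t$ at an arbitrary $(t_0, x_0)$, I would use an $\varepsilon$-optimal stopping time argument based on the representation \eqref{E:J_iD}. Since $J_i f$ is already known to be decreasing in $t$ by Proposition \ref{T:Ji}(\ref{T:decrt}), it suffices to bound $|J_i f(t, x_0) - J_i f(t_0, x_0)|$ above by a quantity tending to zero as $t \to t_0$ from either side. For right-continuity, pick $\tau^{\star} \in \mathcal{T}_{T-t_0}$ that is $\varepsilon$-optimal at $(t_0, x_0)$ and plug $\tau^{\star}\wedge (T-t) \in \mathcal{T}_{T-t}$ into the problem at $(t, x_0)$ for $t > t_0$. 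The resulting upper bound on $J_i f(t_0, x_0) - J_i f(t, x_0)$ splits as $\varepsilon$ plus a horizon-truncation error (vanishing as $t \downarrow t_0$ because $\tau^{\star} - \tau^{\star}\wedge (T-t) \to 0$ a.s.\ with bounded integrands) plus a time-shift error whose integrand is the pointwise difference $f(t_0+u, \cdot) - f(t+u, \cdot)$, which tends to zero by dominated convergence since $f$ is bounded and has at most countably many discontinuities in $t$ for each fixed $y$. Left-continuity is symmetric via $\tau_t^{\star} \wedge (T-t_0)$.

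For joint continuity at $(t_0, x_0)$, I would use the triangle inequality
\[
\bigl| J_i f(t_n, x_n) - J_i f(t_0, x_0) \bigr| \leq \bigl| J_i f(t_n, x_n) - J_i f(t_n, x_0) \bigr| + \bigl| J_i f(t_n, x_0) - J_i f(t_0, x_0) \bigr|,
\]
in which the second summand vanishes by the time-continuity just established. The first summand is then controlled by convexity of $J_i f$ in $x$ together with its uniform boundedness (Proposition \ref{T:Ji}(i)): these yield a local Lipschitz estimate in $x$ whose constant is locally bounded in $t$, so the summand vanishes as $x_n \to x_0$.

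The main obstacle is the rigorous application of dominated convergence to the time-shift error when $f$ is only monotone (possibly discontinuous) in $t$: for fixed $\omega$, the set of $u$ where $f(\cdot, \hat X^{t_0, x_0, \sigma_i}_{t_0+u}(\omega))$ is discontinuous at $t_0+u$ in general depends on $\omega$. I would overcome this via a Fubini-type exchange that combines the at-most-countable discontinuity set of $t\mapsto f(t,y)$ for each fixed $y$ with the fact that $\hat X^{t_0, x_0, \sigma_i}$ is non-degenerate on $(l, h)$ and therefore possesses a density at every positive time, so that the ``bad'' set of $u$'s has Lebesgue measure zero almost surely.
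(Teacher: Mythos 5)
Your argument is correct in outline but follows a genuinely different route from the paper. The paper does not use $\varepsilon$-optimal stopping times at all: it first derives a Lipschitz bound in $x$, uniform in $t$, from convexity plus boundedness (via a contradiction with blowing-up left-derivatives), and then rules out a temporal jump of $u=J_if$ at $t_0$ by contradiction, using the martingale (dynamic programming) identity on a small space-time rectangle $\mathcal R$ contained in the continuation set; the contribution of $f$ then enters only through $\int_0^{t_0-t}\tilde\E\bigl[e^{\int_0^u\hat X-\lambda_i\,ds}\,|f|\bigr]\ud u$ over a vanishing time interval, so mere boundedness of $f$ suffices and no regularity of $f$ in time is ever needed (a separate case handles a jump down to the value $1$ by showing the process would enter the stopping region immediately). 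Your direct comparison of $J_if(t,x_0)$ and $J_if(t_0,x_0)$ via a common $\varepsilon$-optimal stopping time instead forces you to control the time-shift error $f(t+u,\cdot)-f(t_0+u,\cdot)$ over the \emph{entire} interval $[0,\tau^{\star}]$, which is exactly where the measure-theoretic delicacy you flag arises. Your proposed fix is sound: since $f(\cdot,y)$ is monotone, its discontinuity set $D_y$ is countable, and because $\phi(\cdot,\sigma_i)>0$ on $(l,h)$ the process admits a transition density $p_u(y)$ for $u>0$, so a Tonelli exchange plus two applications of dominated convergence (first in $u$ for fixed $y$, then in $y$ with dominating function $2\|f\|_\infty\int_0^{T-t_0}p_u(y)\ud u$) kills the error; note also that for the left limit the $\varepsilon$-optimal time depends on $t$, so you should discard it and bound the shift error by the integral over all of $[0,T-t_0]$, which your nonnegativity observation permits. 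In short, your approach is more self-contained (it avoids invoking martingality of the value process on the continuation set) but pays with the density/Fubini machinery and the need for $f$ to be jointly measurable; the paper's localization trick buys a proof that only ever sees $f$ through its sup-norm on a shrinking time window.
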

\begin{proof}The argument is a trouble-free extension of the proof of the third part of Theorem 3.10 in \cite{EV16}; still, we include it for completeness.  
Before we begin, in order to simplify notation, we will write $u:= J_{i} f$.

Firstly, we let $r\in (l, h)$ and will prove that there exists $K>0$ such that, for every $t\in [0,T]$, the map $x\mapsto J_{i}f(t,x)$ is $K$-Lipschitz continuous on $(l, r]$. To obtain a contradiction, assume that there is no such $K$. Then, by convexity of $u$ in the second variable,  there is a sequence $\{ t_{n}\}_{n\geq0} \subset [0,T]$ such that the left-derivatives $\pt^{-}_{2} u(t_{n}, r) \nearrow \infty$. Hence, for $r' \in (r, h)$, the sequence $u(t_{n}, r') \to \infty$, which contradicts that $u(t_{n}, r') \leq u(0, r') < \infty$ for all $n \in \N$. 

Now, it remains to show that $u$ is continuous in time. Assume for a contradiction that the map $t \mapsto u(t, x_{0})$ is not continuous at $t=t_{0}$ for some $x_{0}$. Since $u$ is decreasing in time, $u(\cdot, x_{0})$ has a negative jump at $t_{0}$. Next, we will investigate the cases $u(t_{0}-, x_{0}) > u(t_{0}, x_{0})$ and $u(t_{0}, x_{0}) > u(t_{0}+, x_{0})$ separately.

Suppose $u(t_{0}-, x_{0}) > u(t_{0}, x_{0})$. By Lipschitz continuity in the second variable, there exists $\delta >0$ such that, writing $\mathcal{R} = (t_{0}-\delta, t_{0}) \times (x_{0} - \delta, x_{0}+\delta)$, 
\begin{IEEEeqnarray}{rCl} \label{E:AssCv}
\inf_{(t,x) \in \mathcal{R}} u(t, x) > u(t_{0}, x_{0} + \delta). 
\end{IEEEeqnarray}

Thus $\mathcal{R} \subseteq \mathcal{C}^{f}_{i}$. Let $t \in (t_{0}-\delta, t_{0})$ and $\tau_{\mathcal{R}} := \inf \{ s \geq 0\,:\, (t+s, \hat X^{t,x,\sigma_{i}}_{t+\tau_{\mathcal{R}}}) \notin \mathcal{R} \}$. Then, by the martingality in the continuation region,
%\todo{clarify? Include reference.}
\begin{IEEEeqnarray*}{rCl}
u(t,x_0) &=& \tilde{\E} \bigg[ e^{ \int_0^{\tau_{\mathcal R}} \hat X^{t,x_0, \sigma_{i}}_{t+u} - \lambda_{i} \ud u }u(t+\tau_\mathcal R, \hat X^{t,x_0, \sigma_{i}}_{t+\tau_{\mathcal R}}) \\&&+ \int_{0}^{\tau_{\mathcal{R}}} e^{ \int_0^u  \hat X^{t,x_{0}, \sigma_{i}}_{t+s} -\lambda_{i} \ud s }f(t+u, \hat X^{t,x_{0},\sigma_{i}}_{t+u}) \ud u \bigg]\\
&\leq& \tilde{\E} \bigg[ e^{ (t_0-t)(x_{0}+\delta)^{+} }u(t,x_0+\delta)\Ind_{\{t+\tau_\mathcal R <t_0\}} \\
&&+ e^{ (t_0-t)(x_0+\delta)^{+}}u(t_0,x_0+\delta)\Ind_{\{t+\tau_\mathcal R = t_0\}}\\
&&+ \int_{0}^{t_{0}-t} e^{ \int_0^u  \hat X^{t,x_{0}, \sigma_{i}}_{t+s} -\lambda_{i} \ud s }|f(t+u, \hat X^{t,x_{0},\sigma_{i}}_{t+u})| \ud u \bigg]\\
&\leq& e^{(t_0-t)(x_0+\delta)^+}u(t,x_0+\delta) \tilde{\P}(t+\tau_\mathcal R <t_0) +e^{(t_0-t)(x_0+\delta)^+}u(t_0,x_0+\delta)\\
&&+\int_{0}^{t_{0}-t} \tilde{\E} \lt[ e^{ \int_0^u  \hat X^{t,x_{0}, \sigma_{i}}_{t+s} -\lambda_{i} \ud s } |f(t+u, \hat X^{t,x_{0},\sigma_{i}}_{t+u})|\rt] \ud u \\
&\to& u(t_0,x_0+\delta)
\end{IEEEeqnarray*}  
as $t \to t_{0}$, contradicting \eqref{E:AssCv}.

The other case to consider is $u(t_{0}, x_{0}) > u(t_{0}+, x_{0})$; we look into the situation $u(t_{0}, x_{0}) > u(t_{0}+, x_{0})>1$ first. The local Lipschitz continuity in the second variable and the decay in the first variable imply that there exist $\epsilon>0$ and $\delta>0$ such that, writing $\mathcal{R} = (t_{0}, t_{0}+\epsilon] \times [x_{0}-\delta, x_{0}+\delta]$, 
\begin{IEEEeqnarray}{rCl} \label{E:AssCd}
u(t_{0}, x_{0}) > \sup_{(t,x) \in \mathcal{R}} u(t,x)  \geq \inf_{(t,x) \in \mathcal{R}} u(t,x) > 1.
\end{IEEEeqnarray} 
Hence, $\mathcal R \subseteq \mathcal{C}^{f}_{i}$ and writing $\tau_\mathcal R:=\inf\{s\geq 0: (t_{0}+s, \hat X^{t_{0},x_0, \sigma_{i}}_{t_{0}+s})\notin\mathcal R\}$ we have
\begin{IEEEeqnarray*}{rCl}
u(t_{0},x_0) &=& \tilde{\E} \bigg[ e^{ \int_{0}^{\tau_{\mathcal R}} \hat X^{t_{0},x_0 , \sigma_{i}}_{t_{0}+u} - \lambda_{i} \ud u }
u(t_0+\tau_\mathcal R, \hat X^{t_{0},x_0, \sigma_{i}}_{t_0+\tau_{\mathcal R}}) \\
&&+ \int_{0}^{\tau_{\mathcal{R}}} e^{ \int_0^u  \hat X^{t_0,x_0, \sigma_{i}}_{t_0+s} -\lambda_{i} \ud s }f(t_0+u, \hat X^{t_0,x_0,\sigma_{i}}_{t_0+u}) \ud u \bigg] \\
&\leq& \tilde{\E} \lt[ e^{\epsilon(x_0+\delta)^+}u(t_{0},x_0+\delta)\Ind_{\{\tau_\mathcal R <\epsilon\}}\rt]\\
&&+\tilde{\E} \bigg[e^{\epsilon(x_0+\delta)^+}u(t_0+\epsilon,x_0+\delta)\Ind_{\{\tau_\mathcal R = \epsilon\}} \\&&+\int_{0}^{\epsilon} e^{ \int_0^u  \hat X^{t_0,x, \sigma_{i}}_{t_0+s} -\lambda_{i} \ud s }|f(t_0+u, \hat X^{t_0,x_0,\sigma_{i}}_{t_0+u})| \ud u\bigg]\\
&\leq& e^{\epsilon(x_0+\delta)^+}u(t_{0},x_0+\delta) \tilde{\P}(\tau_\mathcal R <\epsilon) +e^{\epsilon(x_0+\delta)^+}u(t_0+\epsilon,x_0+\delta)\\
&&+\int_{0}^{\epsilon} \tilde{\E} \lt[ e^{ \int_0^u  \hat X^{t_0,x_0, \sigma_{i}}_{t_0+s} -\lambda_{i} \ud s }|f(t_0+u, \hat X^{t_0,x_0,\sigma_{i}}_{t_0+u})| \rt] \ud u \\
&\to& u(t_0+,x_0+\delta)
\end{IEEEeqnarray*}
as $\epsilon \searrow 0$, which contradicts \eqref{E:AssCd}.

Lastly, suppose that $u(t_{0}, x_{0}) > u(t_{0}+, x_{0}) = 1$. By Lipschitz continuity in the second variable, there exists $\delta>0$ such that
\begin{IEEEeqnarray}{rCl}
\label{ringo}
\inf_{x\in(x_{0}-\delta, x_{0})}u(t_{0},x)>u(t_0+,x_0)=1.
\end{IEEEeqnarray}
Consequently, $(t_{0}, T]\times (x_{0}-\delta, x_{0}) \subseteq \mathcal{D}^{f}_{i}$. Hence the process $\hat X^{t_{0}, x_{0}-\delta/2, \sigma_{i}}$ hits the stopping region immediately and so $(t_{0}, x_{0}-\delta/2) \in \mathcal{D}^{f}_{i}$, which contradicts $\eqref{ringo}$.
\end{proof}

\begin{proposition}[Optimal stopping boundary] \label{T:OpJf} \item
Let $f: [0,T] \times (l, h) \to \R $ be bounded, decreasing in the first variable as well as increasing and convex in the second variable. Then the following hold.
\begin{enumerate}[(i)]
\item There exists a function $b^{f}_{\sigma_{i}}: [0,T) \to [l,h]$ that is both increasing, right-continuous with left limits, and satisfies
\begin{IEEEeqnarray}{rCl} \label{E:bSep}
\mathcal{C}^{f}_{i} = \{ (t,x) \in [0,T) \times (l, h) \,:\, x > b^{f}_{\sigma_{i}}(t) \}.
\end{IEEEeqnarray}
%Moreover, the stopping strategy 
%\begin{IEEEeqnarray*}{rCl}
%\tau^{f}_{i}= \inf \{ s \geq 0 \,:\, \hat X_{s} \leq b^{f}_{\sigma(s)}(s)  \}
%\end{IEEEeqnarray*}
%is optimal for the optimal selling problem \eqref{E:OSN}.
\item \label{T:JfPDE}
The pair $(J_{i}f, b^{f}_{\sigma_{i}})$ satisfies  the free-boundary problem
\begin{IEEEeqnarray}{rCl} \label{E:FFBP}
\lt\{
\begin{array}{rl}
 \pt_{t}u(t,x) + {\sigma_{i}} \phi(x, \sigma_{i}) \pt_{x} u(t,x) + \frac{1}{2} \phi(x, \sigma_{i})^{2} \pt_{xx} u(t,x) \\+(x-\lambda_{i})u(t,x)+\lambda_{i} f(t,x) = 0, & 
  \text{ if } x > b^{f}_{\sigma_{i}}(t), \\
u(t,x) = 1, & \text{ if } x \leq b^{f}_{\sigma_{i}}(t) \text{ or } t=T.
\end{array}
\rt.
\end{IEEEeqnarray}
\end{enumerate}
\end{proposition}
\begin{proof}
\begin{enumerate}[(i)]
\item
By Proposition \ref{T:Ji} (\ref{T:ic}), there exists a unique function $b^{f}_{\sigma_{i}}$ satisfying \eqref{E:bSep}. Moreover, by Proposition \ref{T:Ji} (\ref{T:decrt}), this boundary $b^{f}_{\sigma_{i}}$ is increasing. Hence, using Proposition \ref{T:reg}, we also obtain that $b^{f}_{\sigma_{i}}$ is right-continuous with left limits.
\item 
The proof follows a well-known standard argument (e.g.~see \cite[Theorem 7.7 in Chapter 2]{KS2}), thus we omit it. 
\end{enumerate}
\end{proof}

\subsection{A sequence of approximating problems}
\label{ss:approx_seq}

Let us define a sequence of stopping times $\{\xi^{t}_{n}\}_{n \geq 0}$ recursively by
\begin{IEEEeqnarray*}{rCl}
\xi^{t}_{0}&:=&0, \\
\xi^{t}_{n} &:=& \inf \{ s > \xi^{t}_{n-1}\,:\, \sigma( t + s) \neq \sigma(t + \xi^{t}_{n-1})  \}, \quad n > 0.
\end{IEEEeqnarray*}
Here $\xi^{t}_{n}$ represents the duration until the $n$-th volatility jump since time $t$.   Furthermore, let us define a sequence of operators $\{ J^{(n)} \}_{n\geq 0}$ by
\begin{IEEEeqnarray*}{rCl} \label{E:J^n}
(J^{(n)} f)(t,x, \sigma_{i}) &:=& \sup_{\tau \in \mathcal{T}_{T-t}} \tilde{\E} \lt[ e^{\int_0^\tau  \hat X^{t,x, \sigma_{i}}_{t+s}  \ud s } \Ind_{\{ \tau < \xi^{t}_{n} \}} + e^{\int_0^{\xi^{t}_{n}}  \hat X^{t,x, \sigma_{i}}_{t+s} \ud s }f(t+\xi^{t}_{n}, \hat X^{t,x, \sigma_{i}}_{t+\xi^{t}_{n}}) \Ind_{\{\tau \geq \xi^{t}_{n}\}} \rt],\\
{} \IEEEyesnumber
\end{IEEEeqnarray*} 
where $f : [0, T] \times (l,r) \to \R$ is bounded.
In particular, note that $J^{(0)} f = f$ and $J^{(1)}f=Jf$. Similarly as for the operator $J$, we define $J^{(n)}_{i}$ by 
\begin{IEEEeqnarray*}{rCl}
J^{(n)}_{i} f := (J^{(n)}f)(\cdot, \cdot, \sigma_{i}).
\end{IEEEeqnarray*}
%In addition, let $S'$ denote a price process evolving as in \eqref{E:Stock}, but with all $m$ volatility states being the same, i.e.~$\sigma_{1}=\ldots =\sigma_{m}=\sigma$. When willing to make a distinction, in this constant volatility case, we will write 
%%$\check{X}_{t} := \E[ X \,|\, \F^{S'}_{t}]$
%%Then the conditional mean  $\check{X}_{t} = \E[ X \,|\, \F^{S'}_{t}]$ satisfies
%%\begin{IEEEeqnarray*}{rCl}
%%\vd \check{X}_{t} &=& \sigma \psi(t, \check{X}_{t}, \sigma) \ud t + \psi(t, \check{X}_{t}, \sigma) \ud \hat W'_{t},
%%\end{IEEEeqnarray*}
%%where $ \hat W'_t :=  \frac{1}{\sigma} \int_0^t X - \check{X}_s  \ud s + W_t$ is the innovation Brownian motion. 
%$\check{J}_{\sigma_{i}}, \check{J}^{(n)}_{\sigma_{i}}$ instead of $J_{i}, J^{(n)}_{i}$, respectively.
%\begin{IEEEeqnarray*}{rCl}
%(\check{J}^{(n)}_{\sigma} f)(t,x) &:=& \sup_{\tau \in \mathcal{T}_{T-t}} \E \lt[ e^{\int_0^\tau  \check{X}^{t,x}_{t+s}  \ud s } \Ind_{\{ \tau < \xi^{t}_{n} \}} + e^{\int_0^{\xi^{t}_{n}}  \check{X}^{t,x}_{t+s} \ud s }f(t+\xi^{t}_{n}, \check{X}^{t,x}_{t+\xi^{t}_{n}}) \Ind_{\{\tau \geq \xi^{t}_{n}\}} \rt].
%\end{IEEEeqnarray*}

\begin{proposition} \label{T:Jchain}
Let $n \geq 0$ and $i \in \{0,\ldots, m\}$. Then

\begin{IEEEeqnarray}{rCl} \label{E:Chaining}
 J^{(n+1)}_{i}= J_{i} \lt( \sum_{j\neq i} \frac{\lambda_{ij}}{\lambda_{i}} J^{(n)}_{j} \rt). %\underbrace{J_{i}J_{j}J_{i}J_{j}J_{i}\ldots}_{n+1 \text{ operators}}.
\end{IEEEeqnarray}
%If $i \neq j$, then the following hold
%\begin{enumerate}
%\item   $J_{i} J_{j} = J^{(i)}_{2}$,
%\item $J_{i} J^{(j)}_{2n} = J^{(i)}_{2n+1}$ for $n \geq 0$, 
%\item $J_{i} J^{(i)}_{2n-1} = J^{(i)}_{2n}$ for $n>0$.
%\end{enumerate}
\end{proposition}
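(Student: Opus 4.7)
The identity \eqref{E:Chaining} is the dynamic programming principle for the $J^{(n+1)}$-problem, applied at the first volatility jump time $\eta^t_i=\xi^t_1$. My plan is to decompose the integrand defining $J^{(n+1)}_i f(t,x)$ at $\eta^t_i$, use the strong Markov property of the pair $(\hat X,\sigma)$, and identify the expected post-jump reward with a weighted average of the $J^{(n)}_j f$.

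For the $\leq$ direction, I would fix $(t,x,\sigma_i)$ and any $\tau\in\mathcal T_{T-t}$. Using $\xi^t_{n+1}=\eta^t_i+\xi^{t+\eta^t_i}_n$ on $\{\eta^t_i\leq T-t\}$ (a strong Markov identity for the chain $\sigma$), I would split the integrand into pre- and post-$\eta^t_i$ pieces,
\[
e^{\int_0^{\tau}\hat X^{t,x,\sigma_i}_{t+s}\ud s}\Ind_{\tau<\eta^t_i} + \Ind_{\tau\geq\eta^t_i}\,e^{\int_0^{\eta^t_i}\hat X^{t,x,\sigma_i}_{t+s}\ud s}\cdot R_\tau,
\]
where, on $\{\tau\geq\eta^t_i\}$ and with $\tilde\tau:=\tau-\eta^t_i$, the factor $R_\tau$ has exactly the functional form of the integrand in the sup defining $J^{(n)}_{\sigma(t+\eta^t_i)}f\bigl(t+\eta^t_i,\hat X^{t,x,\sigma_i}_{t+\eta^t_i}\bigr)$ at the stopping time $\tilde\tau$. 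Conditioning on $\mathcal F^{\hat X,\sigma}_{\eta^t_i}$ and using that, independently of $\mathcal F^{\hat X,\sigma}_{\eta^t_i-}$, the post-jump state $\sigma(t+\eta^t_i)$ equals $\sigma_j$ with probability $\lambda_{ij}/\lambda_i$, I would bound $\tilde E[R_\tau\mid\mathcal F^{\hat X,\sigma}_{\eta^t_i-}]$ by $g(t+\eta^t_i,\hat X^{t,x,\sigma_i}_{t+\eta^t_i})$, where $g:=\sum_{j\neq i}(\lambda_{ij}/\lambda_i)J^{(n)}_j f$. Substituting back, the expected payoff of $\tau$ in the $J^{(n+1)}$-problem is bounded by its expected payoff in the $J_i g$-problem; taking sup over $\tau$ yields $J^{(n+1)}_i f\leq J_i g$.

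For the reverse inequality, I would argue by gluing. Given $\varepsilon>0$ and any $\tau'\in\mathcal T_{T-t}$, select, measurably in the post-jump initial condition $(u,y,\sigma_j)$, an $\varepsilon$-optimal stopping time $\hat\tau_{j,u,y}$ for $J^{(n)}_j f(u,y)$, and concatenate $\tau'$ (active up to $\eta^t_i$) with the shifted $\hat\tau_{\sigma(t+\eta^t_i),\,t+\eta^t_i,\,\hat X_{t+\eta^t_i}}$ (active thereafter) on $\{\tau'\geq\eta^t_i\}$. The result is an $\mathbb F^{\hat X,\sigma}$-stopping time bounded by $T-t$ whose expected payoff in the $J^{(n+1)}$-problem is at least the expected payoff of $\tau'$ in the $J_i g$-problem minus $\varepsilon$; letting $\varepsilon\downarrow 0$ and taking sup over $\tau'$ gives $J^{(n+1)}_i f\geq J_i g$.

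The hard part will be the rigorous implementation of this measurable-selection/gluing step, which requires a standard but somewhat technical Polish-space selection argument to ensure that the near-optimal continuations after $\eta^t_i$ can be chosen jointly measurable in their initial state and then spliced with $\tau'$ into a single $\mathbb F^{\hat X,\sigma}$-stopping time. The $\leq$ direction is comparatively routine once the decomposition $\xi^t_{n+1}=\eta^t_i+\xi^{t+\eta^t_i}_n$ and the post-jump transition law $\lambda_{ij}/\lambda_i$ are in place.
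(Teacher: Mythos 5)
Your decomposition at the first jump time $\eta^t_i$ and your treatment of the $\leq$ direction — conditioning at $\eta^t_i$, using that the post-jump state is $\sigma_j$ with probability $\lambda_{ij}/\lambda_i$, bounding the continuation reward by $g:=\sum_{j\neq i}(\lambda_{ij}/\lambda_i)J^{(n)}_jf$, and taking suprema — match the paper's argument essentially step for step (the paper writes out only the case $n=1$ with $t=0$ and asserts the general induction step is analogous). Where you genuinely diverge is in the $\geq$ direction. You propose gluing $\varepsilon$-optimal continuations chosen via a measurable-selection argument over the post-jump initial conditions, and you correctly flag that selection step as the technical crux of your route. The paper avoids this entirely: by Proposition \ref{T:OSTJ} each continuation problem $J_jf(u,y)$ admits an \emph{exactly} optimal stopping time, namely the first hitting time $\tau^f_{\sigma_j}(u,y)$ of the deterministic set $\mathcal{D}^f_j$, and a hitting time of a fixed set is automatically jointly measurable in its starting point $(u,y)$. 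The paper therefore simply splices $\tau$ with $\tau^f_{\sigma(\eta_1)}(\eta_1\wedge T,\hat X_{\eta_1\wedge T})$ as in \eqref{E:TauCheck} and evaluates, with no selection theorem needed and no $\varepsilon$ to send to zero. Your approach is more general (it would work even if no optimal stopping time existed), but at the cost of a Polish-space selection argument that the structure already in place — continuation regions determined by the sets $\mathcal{C}^f_i$, $\mathcal{D}^f_i$ and the attainment of the supremum — renders unnecessary here. If you keep your route, you must actually supply that selection argument; as written it is the one step of your proof that is named but not carried out.
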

\begin{proof} % \todo[inline]{Carefully think through the general $m>2$ case.}
The proof is by induction. In order to present the argument of the proof while keeping  intricate notation at bay, we will only prove that, for a bounded $f:[0,T]\times (l,h) \to \R$ and $x\in (l,h)$, the identity $(J^{(2)}_{i}f)(t,x)= (J_{i}(\sum_{j\neq i} \frac{\lambda_{ij}}{\lambda_{i}} J_{j}f))(t,x)$ holds.  The induction step $J^{(n+1)}_{i}= J_{i} \lt( \sum_{j\neq i} \frac{\lambda_{ij}}{\lambda_{i}} J^{(n)}_{j} \rt)$ follows a similar argument, though with more abstract notation. Note that without loss of generality, we can assume $t=0$, which we do.

Firstly, we will show $(J_{i}^{(2)}f)(0,x) \leq J_{i}\bigg( \sum_{j \neq i}\frac{\lambda_{ij}}{\lambda_{i}} (J_{j}f)\bigg)(0,x)$ and then the opposite inequality. For $j \in \N$, we will write $\xi_{j}$ instead of $\xi^{0}_{j}$ as well as will use the notation $\eta_{j}:=\xi_{j}-\xi_{j-1}$.  Let $\tau \in \mathcal{T}_{T}$ and consider  
\begin{IEEEeqnarray*}{rCl}
\IEEEeqnarraymulticol{3}{l}{
A(\tau)} \\ 
 &:=&  \tilde{\E} \lt[  e^{\int_0^\tau  \hat X^{0,x,\sigma_i}_{s}  \ud s } \Ind_{\{ \tau < \eta_{1} \}} +  e^{\int_0^\tau  \hat X^{0,x,\sigma_i}_{s}  \ud s } \Ind_{\{ \eta_{1} \leq \tau < \xi_{2} \}} +e^{\int_0^{\xi_{2}}  \hat X^{0,x,\sigma_i}_{s} \ud s }f(\xi_{2}, \hat X^{0,x, \sigma_i}_{\xi_{2}}) \Ind_{\{\tau \geq \xi_{2}\}}  \rt] \\
&=& \tilde{\E} \bigg[ e^{\int_0^\tau  \hat X^{0,x,\sigma_i}_{s}  \ud s } \Ind_{\{ \tau < \eta_{1} \}} + \tilde{\E} \big[  e^{\int_0^\tau  \hat X^{0,x,\sigma_i}_{s}  \ud s } \Ind_{\{ \eta_{1} \leq \tau < \xi_{2} \}} \\
&&+e^{\int_0^{\xi_{2}}  \hat X^{0,x,\sigma_i}_{s} \ud s }f(\xi_{2}, \hat X^{0,x, \sigma_i}_{\xi_{2}}) \Ind_{\{\tau \geq \xi_{2}\}} \, | \, \F^{\hat X^{0,x,\sigma_{i}}, N}_{\eta_{1}} \big] \bigg], \IEEEyesnumber \label{E:J12}
\end{IEEEeqnarray*}
where $\{N_{t}\}_{t\geq0}$ denotes the process counting the volatility jumps.
The inner conditional expectation in \eqref{E:J12} satisfies
\begin{IEEEeqnarray*}{rCl}
\IEEEeqnarraymulticol{3}{l}{
\tilde{\E} \big[  e^{\int_0^\tau  \hat X^{0,x,\sigma_i}_{s}  \ud s } \Ind_{\{ \eta_{1} \leq \tau < \xi_{2} \}} +e^{\int_0^{\xi_{2}}  \hat X^{0,x,\sigma_i}_{s} \ud s }f(\xi_{2}, \hat X^{0,x, \sigma_i}_{\xi_{2}}) \Ind_{\{\tau \geq \xi_{2}\}} \, | \, \F^{\hat X^{0,x,\sigma_{i}}, N}_{\eta_{1}} \big]} \\
&=& e^{\int_0^{\eta_{1}}  \hat X^{0,x,\sigma_i}_{s}  \ud s } \Ind_{\{ \eta_{1} \leq \tau \}}\tilde{\E} \big[  e^{\int_{\eta_{1}}^{\tau}  \hat X^{0,x,\sigma_i}_{s}  \ud s } \Ind_{\{ \tau < \xi_{2} \}} \\&&+e^{\int_{\eta_{1}}^{\xi_{2}}  \hat X^{0,x,\sigma_i}_{s} \ud s }f(\xi_{2}, \hat X^{0,x, \sigma_i}_{\xi_{2}}) \Ind_{\{\tau \geq \xi_{2}\}} \, | \, \F^{\hat X^{0,x,\sigma_{i}}, N}_{\eta_{1}} \big] \\
&=& e^{\int_0^{\eta_{1}}  \hat X^{0,x,\sigma_i}_{s}  \ud s } \Ind_{\{ \eta_{1} \leq \tau \}} \sum_{j \neq i} \frac{\lambda_{ij}}{\lambda_{i}}
\tilde{\E}^{\eta_{1},\hat X^{0,x,\sigma_i}_{\eta_{1}}, \sigma_{j}} \bigg[ e^{\int_{0}^{\tilde{\tau}} \hat X_{\eta_{1}+s} \ud s} \Ind_{\{\tilde{\tau} < \eta_{2}\}} \\
&&+ e^{\int_{0}^{\eta_{2}} \hat X_{\eta_{1}+s} \ud s }f(\eta_{1}+\eta_{2}, \hat X_{\eta_{1}+\eta_{2}}) \Ind_{\{\tilde{\tau} \geq \eta_{2}\}}\bigg], \IEEEyesnumber \label{E:J12In}
\end{IEEEeqnarray*}
where $\tilde{\tau} = \tau - \eta_{1}$ in the case $\eta_{1} \leq \tau \leq T$. Therefore, substituting \eqref{E:J12In} into \eqref{E:J12} and then taking a supremum over $\tilde{\tau}$, we get
\begin{IEEEeqnarray*}{rCl}
A(\tau) 
&\leq&  
\tilde{\E} \bigg[ e^{\int_0^\tau  \hat X^{0,x,\sigma_i}_{s}  \ud s } \Ind_{\{ \tau < \eta_{1} \}} \\
&&+ e^{\int_0^{\eta_{1}}  \hat X^{0,x,\sigma_i}_{s}  \ud s } \Ind_{\{ \tau \geq \eta_{1} \}}  \sum_{j \neq i} \frac{\lambda_{ij}}{\lambda_{i}} \sup_{\tilde{\tau} \in \mathcal{T}_{T-T\wedge \eta_{1}}}
\tilde{\E}^{\eta_{1},\hat X^{0,x,\sigma_i}_{\eta_{1}}, \sigma_j} \big[ e^{\int_{0}^{\tilde{\tau}} \hat X_{\eta_{1}+s} \ud s} \Ind_{\{\tilde{\tau} < \eta_{2}\}} \\
&&+ e^{\int_{0}^{\eta_{2}} \hat X_{\eta_{1}+s} \ud s }f(\eta_{1}+\eta_{2}, \hat X_{\eta_{1}+\eta_{2}}) \Ind_{\{\tilde{\tau} \geq \eta_{2}\}}\big]\bigg] \\
&=& \tilde{\E} \bigg[ e^{\int_0^\tau  \hat X^{0,x,\sigma_i}_{s}  \ud s } \Ind_{\{ \tau < \eta_{1} \}} + e^{\int_0^{\eta_{1}}  \hat X^{0,x,\sigma_i}_{s}  \ud s } \Ind_{\{ \tau \geq \eta_{1} \}}  \sum_{j \neq i} \frac{\lambda_{ij}}{\lambda_{i}} (J_{j}f)(\eta_{1}, X^{0,x,\sigma_i}_{\eta_{1}})
\bigg] \IEEEyesnumber \label{E:AtauJ}
\end{IEEEeqnarray*}
Taking a supremum over $\tau$ in \eqref{E:AtauJ}, we obtain
\begin{IEEEeqnarray*}{rCl}
(J_{i}^{(2)}f)(0,x) = \sup_{\tau \in \mathcal{T}_{T}} A(\tau) \leq J_{i}\bigg( \sum_{j \neq i}\frac{\lambda_{ij}}{\lambda_{i}} (J_{j}f)\bigg)(0,x). \IEEEyesnumber \label{E:JLEQ}
\end{IEEEeqnarray*}
% Also, supposing that $\eta_{1}\leq T$ as well as $\hat X_{\eta_{1}}$ are known and using a shorthand notation $\tau_{2} = \tau^{f}_{2}(\eta_{1}, S_{\eta_{1}})$, we have that
%
%\begin{IEEEeqnarray}{rCl}
%J_{2} f(\eta_{1}, \hat X_{\eta_{1}}) &=& \tilde{\E}^{\eta_{1}, \hat X_{\eta_{1}}, \sigma_{2}} \lt[ 
%e^{\int_0^{\tau_{2}}  \hat X_{\eta_{1} + s}  \ud s } \Ind_{\{\tau_{2}< \eta_{2}\}} + e^{\int_0^{\eta_{2}}  \hat X_{\eta_{1}+s} \ud s }f(\eta_{1}+\eta_{2}, \hat X_{\eta_{1}+\eta_{2}})  \Ind_{\{\tau_{2} \geq \eta_{2}\}}
%\rt]. \label{E:J2}
%\end{IEEEeqnarray}
% the second from \eqref{E:J_if}.
%\begin{IEEEeqnarray*}{rCl}
%(J^{(i)}_{2} f)(t,x) &=& \sup \limits_{\tau \in \mathcal{T}_{T-t}} 
%\lt\{e^{\lambda_{i}(T-t)} \E \lt[ e^{\int_{0}^{\tau} \hat X^{t,x}_{t+s} -\lambda_{i} \ud s }\rt] \rt\} 
%\sup_{\tau \in \mathcal{T}_{T-t}} \E \lt[ e^{ \int_0^\tau  \hat X^{t,x}_{t+s} -\lambda_{i} \ud s } \Ind_{\{\tau < \eta^{t}_{i} \}} + e^{ \int_0^{\eta^{t}_{i}}  \hat X^{t,x}_{t+s} -\lambda_{i} \ud s }f(t+\eta^{t}_{i}, \hat X^{t,x}_{t+\eta^{t}_{i}}) \Ind_{\{\tau \geq \eta^{t}_{i}\}} \, \vert \, \sigma(t) = \sigma_{i} \rt] \label{E:J_i}
%\end{IEEEeqnarray*}

It remains to establish the opposite inequality. Let $\tau \in \mathcal{T}_{T}$ and define
\begin{IEEEeqnarray}{rCl}
\check{\tau} &:=& \tau \Ind_{\{ \tau \leq \eta_{1}  \}} + (\eta_{1} \wedge T +\tau_{\sigma(\eta_{1})}) \Ind_{\{ \tau > \eta_{1} \}}, \label{E:TauCheck}
\end{IEEEeqnarray}
where $\tau_{\sigma(\eta_{1})} := \tau^{f}_{\sigma(\eta_{1})} (\eta_{1} \wedge T, \hat X^{0,x,\sigma_{i}}_{\eta_{1} \wedge T})$. Clearly, $\check{\tau} \in \mathcal{T}_{T}$. Then 
\begin{IEEEeqnarray*}{rCl}
\IEEEeqnarraymulticol{3}{l}{
(J^{(2)}_{1}f)(0,x)} \\ 
&\geq& A(\check{\tau}) \\
&=&  \tilde{\E} \bigg[  e^{\int_0^{\tau}  \hat X^{0,x,\sigma_i}_{s}  \ud s } \Ind_{\{ \tau < \eta_{1} \}} + e^{\int_0^{\eta_{1}}  \hat X^{0,x,\sigma_i}_{s}  \ud s } \Ind_{\{ \tau \geq \eta_{1}\}} \sum_{j \neq i} \frac{\lambda_{ij}}{\lambda_{i}} \tilde{\E}^{\eta_{1},\hat X^{0,x,\sigma_i}_{\eta_{1}}, \sigma_j} \big[ e^{\int_{0}^{\tau_{\sigma_j}} \hat X_{\eta_{1}+s} \ud s} \Ind_{\{\tau_{\sigma_j} < \eta_{2}\}} \\
&&+ e^{\int_{0}^{\eta_{2}} \hat X_{\eta_{1}+s} \ud s }f(\eta_{1}+\eta_{2}, \hat X_{\eta_{1}+\eta_{2}}) \Ind_{\{ \tau_{\sigma_j} \geq \eta_{2}\}}\big] \bigg]  \\
&=& \tilde{\E} \bigg[  e^{\int_0^{\tau}  \hat X^{0,x,\sigma_i}_{s}  \ud s } \Ind_{\{ \tau < \eta_{1} \}} + e^{\int_0^{\eta_{1}}  \hat X^{0,x,\sigma_i}_{s}  \ud s } \Ind_{\{ \tau \geq \eta_{1}\}} \sum_{j \neq i} \frac{\lambda_{ij}}{\lambda_{i}} (J_{j}f)(\eta_{1}, \hat X^{0,x,\sigma_i}_{\eta_{1}})  \bigg],
%&&
%J^{(2)}_{1}f(0,x) &\geq& \tilde{\E} \bigg[ e^{\int_0^{\check{\tau}}  \hat X^{0,x, \sigma_1}_{s}  \ud s } \Ind_{\{ \check{\tau} < \xi_{2} \}} + e^{\int_0^{\xi{2}}  \hat X^{0,x, \sigma_1}_{s} \ud s }f(\xi_{2}, \hat X^{0,x, \sigma_1}_{\xi_{2}}) \Ind_{\{\check{\tau} \geq \xi_{2}\}} \bigg] \\
%&=& e^{-\lambda_{1}T}  \tilde{\E} \lt[  e^{\int_0^\tau  \hat X^{0,x, \sigma_1}_{s}  \ud s } \,|\, \eta_{1}> T  \rt] \\
%&&+(1-e^{-\lambda_{1}T})  \tilde{\E} \bigg[ \tilde{\E} \lt[ e^{\int_0^\tau  \hat X^{0,x, \sigma_1}_{s}  \ud s } \Ind_{\{\tau<\eta_{1}\}}   \,|\,\F_{\eta_{1}} \rt]   \\
%&& + \Ind_{\{ \eta_{1} \leq \tau \}} e^{\int_0^{\eta_{1}}  \hat X^{0,x, \sigma_1}_{s}  \ud s} \tilde{\E}^{0, \hat X^{0,x, \sigma_1}_{\eta_{1}}, \sigma_{2}} \big[ e^{\int_0^{\tau_{2}}  \hat X_{s}  \ud s} \Ind_{\{ \tau_{2} < \eta_{2} \}} \\
%&&+ e^{\int_0^{\eta_{2}}  \hat X_{\eta_{1}+s} \ud s} f(\eta_{1}+\eta_{2}, \hat X_{\eta_{1}+\eta_{2}})  \Ind_{\{\tau_{2} \geq \eta_{2}\}} \big] \,|\, \eta_{1} \leq T \bigg] \\
%&=& \tilde{\E} \lt[ e^{ \int_0^\tau  \hat X^{0,x, \sigma_1}_{s}  \ud s } \Ind_{\{\tau < \eta_{1} \}} + e^{ \int_0^{\eta_{1}}  \hat X^{0,x, \sigma_{1}}_{s}  \ud s }J_{2}f(\eta_{1}, \hat X^{0,x, \sigma_1}_{\eta_{1}}) \Ind_{\{\tau \geq \eta_{1}\}} \rt].  \label{E:FTau} \IEEEyesnumber
\end{IEEEeqnarray*}
where Proposition \ref{T:OSTJ} was used to obtain the last equality. Hence, by taking supremum over stopping times $\tau \in \mathcal{T}_{T}$, we get  
\begin{IEEEeqnarray}{rCl}
(J^{(2)}_{i}f)(0,x) \geq J_{i} \bigg( \sum_{j \neq i}\frac{\lambda_{ij}}{\lambda_{i}} (J_{j}f) \bigg)(0,x). \label{E:JGEQ}
\end{IEEEeqnarray}
Finally, \eqref{E:JLEQ} and \eqref{E:JGEQ} taken together imply 
\begin{IEEEeqnarray*}{rCl}
(J^{(2)}_{i}f)(0,x) = J_{i} \bigg( \sum_{j \neq i}\frac{\lambda_{ij}}{\lambda_{i}} (J_{j}f) \bigg)(0,x). 
\end{IEEEeqnarray*}

\end{proof}
\begin{remark}
In \cite{LW}, the authors use the same approximation procedure for an optimal stopping problem with regime switching volatility as in this article. Unfortunately, a mistake is made in equation (18) of \cite{LW}, which wrecks the subsequent approximation procedure when the number of volatility states is greater than $2$. The identity (18) therein should be replaced by \eqref{E:Chaining}.
\end{remark}
\subsection{Convergence to the value function}

\begin{proposition}[Properties of the approximating sequence] \label{T:vlim} \item
\begin{enumerate}[(i)]
\item 
The sequence of functions $\{ J^{(n)} 1 \}_{n \geq 0}$ is increasing, bounded from below by $1$ and from above by $e^{hT}$. 
\item \label{T:Jndic}
Every $J^{(n)} 1 $ is decreasing in the first variable $t$ as well as increasing and convex in the second variable $x$.
%\item \label{T:JOrder} 
%\workingremark[inline]{INCORRECT STATEMENT!}
%For every $n \in \N\cup\{0\}$, the value functions satisfy 
%\[\check{J}^{(n)}_{\sigma_{1}} 1 \geq J^{(n)}_{1} 1 \geq \ldots \geq J^{(n)}_{m} 1 \geq \check{J}^{(n)}_{\sigma_{m}} 1\],
%while the boundaries satisfy 
%\[b^{{\check{J}^{(n)}_{\sigma_{1}}1}}_{\sigma_{1}}  \leq b^{{J^{(n)}_{1}1}}_{\sigma_{1}} \leq \ldots \leq b^{{J^{(n)}_{m}1}}_{\sigma_{m}} \leq b^{{\check{J}^{(n)}_{\sigma_{m}}1}}_{\sigma_{m}}. \]
\item \label{T:vappr} The sequence of functions 
\begin{IEEEeqnarray*}{rCl} 
J^{(n)} 1 \nearrow v \quad \text{ pointwise as } n \nearrow \infty.
\end{IEEEeqnarray*}
Moreover, the approximation error
\begin{IEEEeqnarray}{rCl} \label{E:ApprErr}
\| v - J^{(n)} 1 \|_{\infty} \leq e^{hT} \lambda T \frac{(\lambda T)^{n-1}}{(n-1)!}   \text{ as } n \to \infty,
\end{IEEEeqnarray}
where $\lambda := \max \{ \lambda_{i}\,:\, 1 \leq i \leq m \}$. 

\item
For every $n \in \N\cup\lt\{0\rt\}$,
\begin{IEEEeqnarray}{rCl} \label{E:Jbounds}
J_{m}^n 1 \leq J^{(n)} 1 \leq J_{1}^n 1.
\end{IEEEeqnarray}
\end{enumerate} 
\end{proposition}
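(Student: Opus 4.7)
The plan is to prove all four statements by induction on $n$, with the chaining relation from Proposition \ref{T:Jchain} and the structural properties of $J_i$ from Proposition \ref{T:Ji} as the recurring ingredients. The base case $n=0$ reduces everything to $J^{(0)}1\equiv 1$, which trivially lies between $1$ and $e^{hT}$, is decreasing in $t$, increasing and convex in $x$, and coincides with both $J_m^{0}1$ and $J_1^{0}1$.

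For (i) and (ii), I would first note that a convex combination preserves each of the relevant qualitative features (decreasing in $t$, increasing and convex in $x$, pointwise dominance of $1$ and of the previous iterate). In the inductive step, assuming every $J^{(n)}_{j} 1$ enjoys these properties and satisfies $J^{(n)}_{j} 1\ge J^{(n-1)}_{j} 1$, the hypothesis is inherited by $\sum_{j\neq i}\tfrac{\lambda_{ij}}{\lambda_{i}} J^{(n)}_{j} 1$, and applying $J_i$ together with Proposition \ref{T:Ji}(ii)--(vi) transfers everything to $J^{(n+1)}_{i} 1$. The uniform upper bound $J^{(n)}1\le e^{hT}$ follows at once from $\hat X_s\le h$, which makes every integrand appearing in the defining supremum bounded by $e^{hT}$ almost surely.

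For (iii), the crucial simplification when $f\equiv 1$ is that the two indicators in \eqref{E:J^n} combine into the single clean representation $(J^{(n)}_{i} 1)(t,x)=\sup_{\tau\in\mathcal{T}_{T-t}}\tilde{\E}\bigl[e^{\int_0^{\tau\wedge\xi^{t}_{n}} \hat X^{t,x,\sigma_{i}}_{t+s}\ud s}\bigr]$. Since $\tau\wedge\xi^{t}_{n}\in\mathcal{T}_{T-t}$, this already gives $J^{(n)}1\le v$. For each admissible $\tau$, a direct computation yields
\begin{equation*}
\tilde{\E}\bigl[e^{\int_{0}^{\tau}\hat X_s\ud s}\bigr]-\tilde{\E}\bigl[e^{\int_{0}^{\tau\wedge\xi^{t}_{n}}\hat X_s\ud s}\bigr]=\tilde{\E}\bigl[\Ind_{\{\tau>\xi^{t}_{n}\}}\bigl(e^{\int_{0}^{\tau}\hat X_s\ud s}-e^{\int_{0}^{\xi^{t}_{n}}\hat X_s\ud s}\bigr)\bigr]\le e^{hT}\tilde{\P}(\xi^{t}_{n}<T),
\end{equation*}
using $\hat X_{s}\le h$ and $\tau\le T$. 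Being uniform in $\tau$, the bound transfers to the suprema: $\|v-J^{(n)}1\|_{\infty}\le e^{hT}\tilde{\P}(\xi^{t}_{n}<T)$. Since each inter-jump duration is stochastically bounded below by an $\mathrm{Exp}(\lambda)$ random variable, $\xi^{t}_{n}$ stochastically dominates a $\mathrm{Gamma}(n,\lambda)$ distribution; replacing $s^{n-1}$ by $T^{n-1}$ inside the corresponding density integral yields $\tilde{\P}(\xi^{t}_{n}<T)\le \lambda T(\lambda T)^{n-1}/(n-1)!$, which proves \eqref{E:ApprErr} and the monotone (in fact uniform) convergence $J^{(n)}1\nearrow v$.

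Part (iv) is where the genuine difficulty lies. Assuming inductively that $J_{m}^{n} 1\le J^{(n)}_{j} 1\le J_{1}^{n} 1$ for every $j$, the convex combination in the chaining identity is sandwiched between the same constants, and order-preservation of $J_i$ (Proposition \ref{T:Ji}(v)) gives $J_i(J_m^n 1)\le J^{(n+1)}_{i} 1\le J_i(J_1^n 1)$. Closing the chain requires the comparisons $J_m(J_m^n 1)\le J_i(J_m^n 1)$ and $J_i(J_1^n 1)\le J_1(J_1^n 1)$, i.e.\ a monotonicity of $i\mapsto J_i f$ in $i$ for bounded $f$ that are decreasing in $t$, increasing and convex in $x$, and dominate $1$. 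My approach would exploit that the drift of $\hat X$ in \eqref{E:Xhat} is independent of $\sigma$ while the diffusion coefficient $\phi(\hat X,\sigma)=(h-\hat X)(\hat X-l)/\sigma$ is decreasing in $\sigma$; a coupling of two copies of $\hat X$ driven by a common Brownian motion, combined with convexity of $f$, should propagate the ordering either directly or through the free-boundary PDE of Proposition \ref{T:OpJf}(ii). The delicate point, which I expect to be the main obstacle, is disentangling this pure $\sigma$-monotonicity from the simultaneous dependence on the jump rate $\lambda_i$ appearing explicitly in the Feynman--Kac representation \eqref{E:J_iD}.
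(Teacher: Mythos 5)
Your arguments for parts (i)--(iii) are correct and essentially coincide with the paper's. In particular, the simplification $(J^{(n)}_{i}1)(t,x)=\sup_{\tau}\tilde{\E}\bigl[e^{\int_0^{\tau\wedge\xi^{t}_{n}}\hat X^{t,x,\sigma_{i}}_{t+s}\,\ud s}\bigr]$, the resulting suboptimality bound $J^{(n)}1\le v$, and the Erlang-tail estimate $\tilde{\P}(\xi^{t}_{n}\le T-t)\le\lambda T(\lambda T)^{n-1}/(n-1)!$ obtained by stochastic domination of the inter-jump times by $\mathrm{Exp}(\lambda)$ variables are exactly the paper's ingredients. The only difference is presentational: the paper sandwiches $J^{(n)}_{i}1$ and $v_{i}$ between $U^{(i)}_{n}(t,x):=\sup_{\tau}\tilde{\E}\bigl[e^{\int_0^{\tau}\hat X^{t,x,\sigma_{i}}_{t+s}\,\ud s}\Ind_{\{\tau<\xi^{t}_{n}\}}\bigr]$ and $U^{(i)}_{n}+e^{h(T-t)}\P(\xi^{t}_{n}\le T-t)$, whereas you bound the difference of the two payoffs pathwise on $\{\tau>\xi^{t}_{n}\}$ before taking suprema; both are valid. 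Your induction for (ii) through Proposition \ref{T:Jchain} and Proposition \ref{T:Ji} is also the paper's argument.

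Part (iv) is where your proof is genuinely incomplete, and you have diagnosed the missing step correctly without supplying it. The chaining identity together with order preservation reduces the inductive step to the single-operator comparison $J_{m}g\le J_{i}g\le J_{1}g$ for a bounded $g\ge 1$ that is increasing and convex in $x$; your proposal only asserts that a coupling ``should'' deliver this, possibly via the free-boundary PDE. The paper closes precisely this step by citing the Ekstr\"om--Tysk comparison theorem \cite[Theorem 6.1]{ET08}: since the drift $\sigma\phi(x,\sigma)=(h-x)(x-l)$ of $\hat X$ does not depend on the volatility state while the diffusion coefficient $\phi(x,\sigma)$ is decreasing in $\sigma$, monotonicity of convexity-preserving values in the diffusion coefficient yields the required ordering of $J_{m}$, $J_{i}$, $J_{1}$ on convex inputs. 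Your concern about the explicit $\lambda_{i}$-dependence in \eqref{E:J_iD} is legitimate --- the paper's one-line citation leaves to the reader the check that the killing rate $\lambda_{i}$ paired with the creation term $\lambda_{i}f$, $f\ge 1$, does not disturb the comparison --- but as written your argument terminates at the statement of the needed lemma rather than at a proof of it, so part (iv) remains open in your proposal.
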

\begin{proof} 
\begin{enumerate}[(i)]
\item The statement that $\{ J^{(n)}_{i} 1 \}_{n \geq 0}$ is increasing, bounded from below by $1$ and from above by $e^{hT}$ is a direct consequence of the definition \eqref{E:J^n}. 
\item
The claim that every $J^{(n)}_{i} 1 $ is decreasing in the first variable $t$ as well as increasing and convex in the second variable $x$ follows by a straightforward induction on $n$, 
%Fixing $n=0$, the claim holds trivially as $J^{(0)}_{1} 1 = \ldots  = J^{(0)}_{m} = 1$. Now, assume 
%
 using Proposition \ref{T:Ji} (\ref{T:decrt}),(\ref{T:ic}) and Proposition \ref{T:Jchain} at the induction step.
\item First, let $i \in \{ 1, \ldots, m\}$ and note that, for any $n \in \N$,
\begin{IEEEeqnarray*}{rCl}
J^{(n)}_{i}1 \leq v_{i}.
\end{IEEEeqnarray*}
Here the inequality holds by suboptimality, since $J^{(n)}_{i}1$ corresponds to an expected payoff of a particular stopping time in the problem \eqref{E:OSN}. Next, define
\begin{IEEEeqnarray*}{rCl}
U^{(i)}_{n}(t,x) &:=& \sup_{\tau \in \mathcal{T}_{T-t}} \tilde{\E} \lt[  e^{\int_0^\tau  \hat X^{t,x, \sigma_{i}}_{t+s}  \ud s } \Ind_{\{ \tau < \xi^{t}_{n} \}}\rt].
\end{IEEEeqnarray*}
Then 
\begin{IEEEeqnarray}{rCl} \label{E:Sandwich}
U^{(i)}_{n}(t,x) \leq (J^{(n)}_{i}1) (t,x) \leq v_{i}(t,x) \leq U^{(i)}_{n}(t,x) + e^{h(T-t)} \P(\xi^{t}_{n} \leq T-t). 
\end{IEEEeqnarray}
Since it is a standard fact that the $n^{\text{th}}$ jump time, call it $\zeta_{n}$, of a Poisson process with jump intensity $\lambda := \max \{ \lambda_{i}\,:\, 1 \leq i \leq m \}$ follows the Erlang distribution, we have 
\begin{IEEEeqnarray*}{rCl}
\P(\xi^{t}_{n} \leq T-t) &\leq& \P(\zeta_{n} \leq T-t)
\\&=& \frac{1}{(n-1)!} \int_{0}^{\lambda (T-t)} u^{n-1} e^{-u} \ud u \\
&\leq& \lambda T \frac{(\lambda T)^{n-1}}{(n-1)!}. \\
\end{IEEEeqnarray*}

Therefore, by \eqref{E:Sandwich},
\begin{IEEEeqnarray*}{rCl}
\| v - J^{(n)} 1 \|_{\infty}  \leq e^{hT} \lambda T \frac{(\lambda T)^{n-1}}{(n-1)!} \text{ as } n \to \infty.
\end{IEEEeqnarray*}

\item
The string of inequalities \eqref{E:Jbounds} will be proved by induction. First, the base step is obvious. Now, suppose \eqref{E:Jbounds} holds for some $n \geq 0$. Hence, for any $i \in \{1, \ldots, m\}$,
\begin{IEEEeqnarray}{rCl} \label{E:JIH}
J^{n}_{m} 1 \leq  \sum_{j\neq i} \frac{\lambda_{ij}}{\lambda_{i}} J^{(n)}_{j} 1  \leq J^n_1 1.
\end{IEEEeqnarray}
Let us fix $i \in \lt\{ 1, \ldots, m \rt\}$. By Proposition \ref{T:Ji} (\ref{T:ic}), every function in \eqref{E:JIH} is convex in the spatial variable $x$, thus \cite[Theorem 6.1]{ET08} yields
\begin{IEEEeqnarray*}{rCl} 
J^{n+1}_{m} 1 \leq J_i \lt(  \sum_{j\neq i} \frac{\lambda_{ij}}{\lambda_{i}} J^{(n)}_{j} 1 \rt)  \leq J^{n+1}_1 1.
\end{IEEEeqnarray*}
As $i$ was arbitrary, we also have
\begin{IEEEeqnarray}{rCl}  
J_{\sigma_{m}}^{n+1} 1 \leq J^{(n+1)} 1 \leq J_{\sigma_{1}}^{n+1} 1.
\end{IEEEeqnarray}
\end{enumerate}
\end{proof}

\begin{remark}
If instead of $1$ we choose the constant function $e^{hT}$ to apply the operators $J^{(n)}_{i}$ to, then, following the same strategy as above,
$\{ J^{(n)}_{i} e^{hT} \}_{n \geq 0}$ is a decreasing sequence of functions with the limit $ J^{(n)}_{i} e^{hT} \searrow v_{i}$ pointwise as $n \nearrow \infty$.
\end{remark}

Let $\mathcal{B}_{b}([0,T]\times (l,h); \R)$ denote the set of bounded  functions from  $[0,T]\times (l,h)$ to $\R$ and define an operator $
\tilde{J} : \mathcal{B}_{b}([0,T]\times (l,h); \R)^{m} \to \mathcal{B}_{b}([0,T]\times (l,h); \R)^{m}$  by
\begin{IEEEeqnarray*}{rCl}
\tilde{J} \lt( \begin{array}{c}
f_{1} \\ \vdots \\ f_{m}
\end{array}\rt) &:=& \lt( \begin{array}{c}
J_{1} ( \sum_{j \neq 1} \frac{\lambda_{1j}}{\lambda_{1}} f_{j} ) \\ \vdots \\ J_{m} (\sum_{j \neq m} \frac{\lambda_{mj}}{\lambda_{m}} f_{j} )\end{array} \rt).
\end{IEEEeqnarray*}

%Let $\mathcal{B}_{b}([0,T]\times \R; \R)$ denote the set of bounded measurable functions from  $[0,T]\times \R$ to $\R$ and define an operator $
%\tilde{J} : \mathcal{B}_{b}([0,T]\times \R; \R)^{2} \to \mathcal{B}_{b}([0,T]\times \R; \R)^{2}$  by
%\begin{IEEEeqnarray*}{rCl}
%\tilde{J} \lt( \begin{array}{c}
%f_{0} \\ f_{1}
%\end{array}\rt) &:=& \lt( \begin{array}{c}
%J_{0} f_{1} \\ J_{1} f_{0} \end{array} \rt).
%\end{IEEEeqnarray*}
\begin{proposition} \label{T:AJ}
\begin{enumerate}[(i)]
\item Let $f \in \mathcal{B}_{b}([0,T]\times (l,h); \R)^{m}$. Then 
\begin{IEEEeqnarray*}{rCl}
\lim_{n\to \infty} \tilde{J}^{n} f &=& \lt(\begin{array}{c}
v_{1} \\ \vdots \\ v_{m}
\end{array} \rt).
\end{IEEEeqnarray*}
\item \label{T:JFP}
The vector $(v_{1}, \ldots,  v_{m})^{tr}$ of value functions is a fixed point of the operator $\tilde{J}$, i.e.
\begin{IEEEeqnarray}{rCl} \label{E:JFP}
\tilde{J} \lt( \begin{array}{c}
v_{1} \\ \vdots \\ v_{m} 
\end{array} \rt) &=& \lt( \begin{array}{c}
v_{1} \\ \vdots \\ v_{m}  \end{array}\rt).
\end{IEEEeqnarray}

\end{enumerate}
\end{proposition}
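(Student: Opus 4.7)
The plan is to prove part (ii) first as a one-step dynamic programming principle (DPP), and then to obtain part (i) from an $n$-step analogue combined with a sandwich estimate in the spirit of the one used in the proof of Proposition~\ref{T:vlim}(\ref{T:vappr}).

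For part (ii), I would condition on the first volatility jump time $\eta^t_i$ and use the strong Markov property of $(\hat X^{t,x,\sigma_i}, \sigma)$ together with the fact that $\sigma(t+\eta^t_i)$ takes value $\sigma_j$ with probability $\lambda_{ij}/\lambda_i$ independently of $\F^{\hat X^{t,x,\sigma_{i}}, N}_{\eta^t_i}$. Splitting any $\tau \in \mathcal{T}_{T-t}$ on the events $\{\tau < \eta^t_i\}$ and $\{\tau \geq \eta^t_i\}$ should then yield
\begin{IEEEeqnarray*}{rCl}
v_i(t,x) &=& \sup_{\tau \in \mathcal{T}_{T-t}} \tilde{\E}\bigg[e^{\int_0^\tau \hat X^{t,x,\sigma_i}_{t+s} \ud s} \Ind_{\{\tau < \eta^t_i\}} \\
&& +\, e^{\int_0^{\eta^t_i} \hat X^{t,x,\sigma_i}_{t+s} \ud s} \sum_{j\neq i} \tfrac{\lambda_{ij}}{\lambda_i}\, v_j(t+\eta^t_i, \hat X^{t,x,\sigma_i}_{t+\eta^t_i}) \Ind_{\{\tau \geq \eta^t_i\}}\bigg],
\end{IEEEeqnarray*}
which is exactly $J_i\bigl(\sum_{j\neq i} (\lambda_{ij}/\lambda_i) v_j\bigr)(t,x)$, i.e.\ the $i$-th component of $\tilde{J}((v_1,\ldots,v_m)^{tr})$. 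The ``$\leq$'' direction would be verified by a suboptimality argument (concatenating a candidate $\tau$ for $v_i$ with the optimal continuation after $\eta^t_i$), and the ``$\geq$'' direction by computing the expected payoff obtained when following an arbitrary $\tau$ on $[0, \eta^t_i \wedge (T-t)]$ and then the optimal stopping rule for the appropriate $v_j$ thereafter, exactly as in the proof of Proposition~\ref{T:Jchain}.

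For part (i), I would first establish by induction on $n$, using Proposition~\ref{T:Jchain} and the same strong Markov property, the probabilistic representation
\begin{IEEEeqnarray*}{rCl}
(\tilde{J}^n f)_i(t,x) &=& \sup_{\tau \in \mathcal{T}_{T-t}} \tilde{\E}\bigg[e^{\int_0^\tau \hat X^{t,x,\sigma_i}_{t+s} \ud s} \Ind_{\{\tau < \xi^t_n\}} \\
&& +\, e^{\int_0^{\xi^t_n} \hat X^{t,x,\sigma_i}_{t+s} \ud s} f_{\kappa_n}(t+\xi^t_n, \hat X^{t,x,\sigma_i}_{t+\xi^t_n}) \Ind_{\{\tau \geq \xi^t_n\}}\bigg],
\end{IEEEeqnarray*}
where $\kappa_n$ is the (random) index such that $\sigma(t+\xi^t_n) = \sigma_{\kappa_n}$. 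Setting $M := \max_j \|f_j\|_\infty$, the bracketed expression above differs from the integrand $e^{\int_0^\tau \hat X^{t,x,\sigma_i}_{t+s} \ud s}$ defining $v_i$ only on the event $\{\tau \geq \xi^t_n\}$, on which the absolute difference is bounded deterministically by $e^{hT}(1 + M)$. Hence, uniformly in $\tau \in \mathcal{T}_{T-t}$,
\begin{IEEEeqnarray*}{rCl}
\bigl|(\tilde{J}^n f)_i(t,x) - v_i(t,x)\bigr| &\leq& e^{hT}(1+M)\,\tilde{\P}(\xi^t_n \leq T-t),
\end{IEEEeqnarray*}
and the Erlang-tail bound $\tilde{\P}(\xi^t_n \leq T-t) \leq \lambda T (\lambda T)^{n-1}/(n-1)!$ already established in the proof of Proposition~\ref{T:vlim}(\ref{T:vappr}) then delivers the required (in fact, uniform) convergence to $v_i$.

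The hard part will be the notational bookkeeping needed to justify the probabilistic representation of $\tilde{J}^n f$ by induction. At each inductive step one must invoke Proposition~\ref{T:Jchain}, apply the strong Markov property of $(\hat X^{t,x,\sigma_i}, \sigma)$ at the first jump time, and identify the $\lambda_{ij}/\lambda_i$-weighted sum over $j\neq i$ with the expectation over the post-jump volatility state---exactly as in the one-step argument for part (ii), but tracking the random chain of indices $\kappa_1, \ldots, \kappa_n$ that arises. The difficulty is organisational rather than conceptual, since no new idea beyond the one-step DPP is required; managing the filtrations and the stopping-time splittings at the successive jumps simply makes the formal writing cumbersome.
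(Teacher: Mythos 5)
Your argument is correct, and it splits into two halves that relate differently to the paper's proof. For part (i) you and the paper do essentially the same thing: identify $\tilde{J}^{n}f$ via the chaining relation \eqref{E:Chaining} with an $n$-jump stopping problem whose terminal reward after $\xi^{t}_{n}$ is read off from $f$, and then kill the terminal contribution with the Erlang tail bound from the proof of Proposition \ref{T:vlim}; your version is in fact slightly more careful, since the paper compresses this to ``$J^{(n)}_{i}g\to v_{i}$ for any bounded $g$'' while your representation with the random index $\kappa_{n}$ correctly accounts for the fact that the components $f_{1},\dots,f_{m}$ of a general $f$ need not coincide. For part (ii), however, you take a genuinely different route: you re-run the one-step dynamic programming argument of Proposition \ref{T:Jchain} directly at the level of the value functions $v_{j}$, whereas the paper simply lets $n\nearrow\infty$ in the identity $J^{(n+1)}_{i}1=J_{i}\bigl(\sum_{j\neq i}\tfrac{\lambda_{ij}}{\lambda_{i}}J^{(n)}_{j}1\bigr)$ and uses the monotone convergence $J^{(n)}_{j}1\nearrow v_{j}$ together with the representation \eqref{E:J_iD} to pass the limit through $J_{i}$. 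The paper's limiting argument is shorter given what is already established and avoids having to handle optimal continuation rules for the full regime-switching problem; your direct DPP is self-contained and arguably more transparent, but note that its ``$\geq$'' half needs an optimal (or at least $\epsilon$-optimal) stopping time for each $v_{j}$ itself, not merely for the one-jump problems covered by Proposition \ref{T:OSTJ} --- this is harmless (use $\epsilon$-optimal times or \cite[Corollary 2.9]{PS06}, whose application here does not rely on \eqref{E:JFP}), but it should be said explicitly to rule out any appearance of circularity.
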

\begin{proof}
\begin{enumerate}[(i)]
\item
Observe that the argument in the proof of part (\ref{T:vappr})~of Proposition \ref{T:vlim} also gives that $J^{(n)}_{i} g \to v_{i}$ as $n \to \infty$ for any bounded $g$. Hence to finish the proof it is enough to recall the relation \eqref{E:Chaining} in Proposition \ref{T:Jchain}.
\item
Let $i \in \{1, \ldots, m\}$.  By Proposition \ref{T:Jchain}, 
\begin{IEEEeqnarray}{rCl} \label{E:Jin1}
 J^{(n+1)}_{i}1= J_{i} \lt( \sum_{j\neq i} \frac{\lambda_{ij}}{\lambda_{i}} J^{(n)}_{j} 1\rt). %\underbrace{J_{i}J_{j}J_{i}J_{j}J_{i}\ldots}_{n+1 \text{ operators}}.
\end{IEEEeqnarray}
By Proposition \ref{T:vlim} (\ref{T:vappr}), for every $j \in \{1, \ldots, m\}$, the sequence $J^{(n)}_{j} 1 \nearrow v_{j}$ as $n \nearrow \infty$, so, letting $n \nearrow \infty$ in \eqref{E:Jin1}, the monotone convergence theorem tells us that 
\begin{IEEEeqnarray}{rCl}
 v_{i}= J_{i} \lt( \sum_{j\neq i} \frac{\lambda_{ij}}{\lambda_{i}} v_{j}  \rt). %\underbrace{J_{i}J_{j}J_{i}J_{j}J_{i}\ldots}_{n+1 \text{ operators}}.
\end{IEEEeqnarray}
   
%By the dynamic programming principle applied to function $v$ in \eqref{E:OSM} at the first volatility jump time together with \eqref{E:J_i}, we have $v_{i}= J_{i} (\sum_{j \neq i } \frac{\lambda_{ji}}{\lambda_{i}}v_{j})$. Consequently, $(v_{1}, \ldots, v_{m})^{tr}$ is a fixed point of $\tilde{J}$.
\end{enumerate}
\end{proof}

\section{The value function and the stopping strategy} \label{S:VFSS}
%The value function $v$ possesses a number of attractive structural properties. 
In this section, we show that the value function $v$ has attractive structural properties and identify an optimal strategy for the liquidation problem \eqref{E:OSM}. The first passage time below a boundary, which is an increasing function of time and volatility, is proved to be optimal. Moreover, we provide a method to approximate the optimal stopping boundary by demonstrating that it is a limit of an increasing sequence of stopping boundaries coming from easier auxiliary problems of Section \ref{S:Appr}.
\begin{theorem}[Properties of the value function] \item \label{T:vprops}
\begin{enumerate}[(i)]
\item
\label{T:vdic}
$v$ is decreasing in the first variable $t$ as well as increasing and convex in the second variable $x$.
\item \label{T:vcts} $v_{i}$ is continuous for every $i \in \{1, \ldots, m\}$.
\item \label{T:vbounds} 
\begin{IEEEeqnarray}{rCl} \label{E:vbounds} 
\check{v}_{\sigma_m} \leq v \leq \check{v}_{\sigma_1},
\end{IEEEeqnarray}
where $\check{v}_{\sigma_i}:[0,T]\times (l, h)  \to \R$ denotes the Markovian value function as in \eqref{E:OSM}, but for a price process \eqref{E:Stock} with constant volatility $\sigma_i$.

\end{enumerate}
\end{theorem}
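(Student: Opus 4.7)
The plan is to harness the approximation machinery developed in Section \ref{S:Appr}. Parts (\ref{T:vdic}) and (\ref{T:vbounds}) will follow by passing to the limit in the approximating sequence $\{J^{(n)} 1\}_{n \geq 0}$, while part (\ref{T:vcts}) will be obtained from the fixed-point identity of Proposition \ref{T:AJ} (\ref{T:JFP}) combined with the regularity result of Proposition \ref{T:reg}. The main difficulty is confined to the outer bounds in (\ref{T:vbounds}).

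For part (\ref{T:vdic}), I would simply invoke Proposition \ref{T:vlim}: by (\ref{T:Jndic}), each $J^{(n)} 1$ is decreasing in $t$ and both increasing and convex in $x$, and by (\ref{T:vappr}), $J^{(n)} 1 \nearrow v$ pointwise. Monotonicity in each variable and convexity in $x$ are all preserved under pointwise monotone limits of uniformly bounded functions, so $v$ inherits these three structural properties.

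For part (\ref{T:vcts}), I would use the fixed-point identity $v_i = J_i \bigl( \sum_{j \neq i} \frac{\lambda_{ij}}{\lambda_i} v_j \bigr)$ from Proposition \ref{T:AJ} (\ref{T:JFP}). By part (\ref{T:vdic}), the inner function $g := \sum_{j \neq i} \frac{\lambda_{ij}}{\lambda_i} v_j$ is a bounded positive convex combination of the $v_j$'s, and hence itself is bounded, decreasing in $t$, and increasing and convex in $x$. Applying Proposition \ref{T:reg} then yields continuity of $J_i g = v_i$.

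The main work is in part (\ref{T:vbounds}). Starting from the sandwich $J_m^n 1 \leq J^{(n)} 1 \leq J_1^n 1$ of Proposition \ref{T:vlim} \eqref{E:Jbounds} and letting $n \to \infty$, the middle term tends to $v$ by Proposition \ref{T:vlim} (\ref{T:vappr}); the obstacle is proving that $J_i^n 1 \nearrow \check v_{\sigma_i}$ as $n \to \infty$ for $i \in \{1, m\}$. The key intermediate step is the representation
\[
(J_i^n 1)(t,x) = \sup_{\tau \in \mathcal{T}_{T-t}^{\hat X, \tilde N}} \tilde{\E}\Bigl[ e^{\int_0^\tau \hat X^{t,x,\sigma_i}_{t+s} \ud s} \Ind_{\{\tau < \tilde \zeta_n\}} + e^{\int_0^{\tilde \zeta_n} \hat X^{t,x,\sigma_i}_{t+s} \ud s} \Ind_{\{\tau \geq \tilde \zeta_n\}} \Bigr],
\]
where $\tilde \zeta_n$ denotes the $n$-th arrival time of an auxiliary Poisson process $\tilde N$ of rate $\lambda_i$ independent of $(\hat X^{t,x,\sigma_i}, B)$. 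This representation is obtained by iterating the definition of $J_i$ and invoking the strong Markov property along the lines of Proposition \ref{T:Jchain}, though the argument is simpler since no regime switching is involved in the inner dynamics. Repeating the sandwich argument of the proof of Proposition \ref{T:vlim} (\ref{T:vappr}) with $\tilde N$ in place of the true volatility jump process, together with the Erlang tail estimate $\tilde{\P}(\tilde \zeta_n \leq T) \to 0$ as $n \to \infty$, then delivers $J_i^n 1 \to \check v_{\sigma_i}$. Passing to the limit in \eqref{E:Jbounds} completes the proof of \eqref{E:vbounds}.
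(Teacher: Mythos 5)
Your proposal is correct and follows essentially the same route as the paper: parts (i) and (ii) are argued identically (pointwise monotone limits of the $J^{(n)}1$, then the fixed-point identity combined with Proposition \ref{T:reg}), and part (iii) passes to the limit in the sandwich \eqref{E:Jbounds}. The only difference is that you explicitly justify $J_i^n 1 \nearrow \check v_{\sigma_i}$ via the auxiliary Poisson clock and the Erlang tail bound, a step the paper leaves implicit by appealing to Proposition \ref{T:vlim} (\ref{T:vappr}); this is a welcome clarification rather than a deviation.
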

\begin{proof}
\begin{enumerate}[(i)]
\item Since, by Proposition \ref{T:vlim} (\ref{T:Jndic}), every $J^{(n)} 1$ is decreasing in the first variable $t$, increasing and convex in the second variable $x$, these properties are also preserved in the pointwise limit $\lim_{n \to \infty} J^{(n)}1$, which is $v$ by Proposition \ref{T:vlim} (\ref{T:vappr}).
\item Using part (\ref{T:vdic}) above, the claim follows from Proposition \ref{T:AJ} (\ref{T:JFP}), i.e.~from the fact that $(v_{1}, \ldots, v_{m})^{tr}$ is a fixed point of a regularising operator $\tilde{J}$ in the sense of Proposition \ref{T:reg}. 
\item 
Letting $n \to \infty$ in \eqref{E:Jbounds}, Proposition \ref{T:vlim} (\ref{T:vappr}) gives us \eqref{E:vbounds}.
\end{enumerate}
\end{proof}

For the optimal liquidation problem \eqref{E:OSN} with constant volatility $\sigma$, i.e.~in the case $\sigma_{1}= \ldots =\sigma_{m} =\sigma$, it has been shown in $\cite{EV16}$ that an optimal liquidation strategy is characterised by a increasing continuous stopping boundary $\check{b}_{\sigma} :[0,T)  \to [l, 0]$ with $\check{b}_{\sigma}(T-)=0$ such that the stopping time $\check{\tau}_{\sigma}= \inf \{t \geq 0 \,:\, \hat X_{t} \leq \check{b}_{\sigma}(t) \}\wedge T$ is optimal. It turns out that the optimal liquidation strategy within our regime-switching volatility model shares some similarities with the constant volatility case as the next theorem shows.

\begin{theorem}[Optimal liquidation strategy] \item \label{T:boundaries}
\begin{enumerate}[(i)]
\item \label{T:b}
For every $i \in \{1,\ldots, m\}$, there exists $b_{\sigma_{i}}: [0,T) \to [l, 0]$ that is increasing, right-continuous with left limits, satisfies the equality 
$b_{\sigma_{i}}(T-) = 0$ and the identity
\begin{IEEEeqnarray}{rCl} \label{E:Civj}
\mathcal{C}^{u_{i}}_{i} = \{ (t,x) \in [0,T) \times (l, h) \,:\, x > b_{\sigma_{i}}(t) \}, 
\end{IEEEeqnarray}
where $u_{i} := { \sum_{j\neq i} \frac{\lambda_{ij}}{\lambda_{i}} v_{j}}$. Moreover,
\[
\check{b}_{\sigma_{1}} \leq b_{\sigma_{i}} \leq \check{b}_{\sigma_{m}}.
\]
for any $i \in \lt\{ 1, \ldots, m \rt\}$.
\item \label{T:OSb}
The stopping strategy 
\begin{IEEEeqnarray*}{rCl}
\tau^{*} := \inf \{ s \in [0,T-t) \,:\, \hat X^{t,x, \sigma}_{t+s} \leq b_{\sigma(t+s)}(t+s)  \} \wedge (T-t)\,.
\end{IEEEeqnarray*}
is optimal for the optimal selling problem \eqref{E:OSM}.

\item \label{T:bconv}
For $i \in \{1,\ldots, m\}$, the boundaries 
\begin{IEEEeqnarray*}{rCl}
b_{\sigma_{i}}^{g_i^{(n)}} \searrow b_{\sigma_{i}} \quad \text{pointwise as } n \nearrow \infty,
\end{IEEEeqnarray*}
where $g_i^{(n)} := \sum_{j\neq i} \frac{\lambda_{ij}}{\lambda_{i}}J^{(n)}_{j}1$.
\item 
The pairs $(v_{1}, b_{\sigma_{1}}), (v_{2},b_{\sigma_{2}}), \ldots,(v_{m},b_{\sigma_{m}})$ satisfy  a coupled system of $m$ free-boundary problems with each being
\begin{IEEEeqnarray}{rCl} \label{E:SFBP}
\lt\{
\begin{array}{rl}
 \pt_{t}v_{i}(t,x) + {\sigma_{i}} \phi(x, \sigma_{i}) \pt_{x} v_{i}(t,x) + \frac{1}{2} \phi(x, \sigma_{i})^{2} \pt_{xx} v_{i}(t,x) \\
 +(x- \lambda_{i})v_{i}(t,x)+\sum_{j \neq i} \lambda_{ij}v_{j}(t,x) = 0, & 
  \text{ if } x > b_{i}(t),\\
v_{i}(t,x) = 1, & \text{ if } x \leq b_{i}(t) \text{ or } t = T,
\end{array}
\rt.
\end{IEEEeqnarray}
where $i \in \{1, \ldots, m\}$.
\end{enumerate}
\end{theorem}
\begin{proof}
\begin{enumerate}[(i)]
\item 
The existence of $b_{\sigma_{i}} : [0, T) \to [l, h]$ that is increasing, right-continuous with left limits, and satisfies \eqref{E:Civj} follows from the fixed-point property \eqref{E:JFP}, and Theorem \ref{T:vprops} (\ref{T:vdic}),(\ref{T:vcts}). Since the range of $\check{b}_{\sigma_{1}}, \check{b}_{\sigma_{m}}$ is $[l, 0]$ and $\check{b}_{\sigma_{1}}(T-)= \check{b}_{\sigma_{m}}(T-)=0$, using Theorem \ref{T:vprops} (\ref{T:vbounds}), we also conclude that $\check{b}_{\sigma_{1}} \leq b_{\sigma_{i}}  \leq \check{b}_{\sigma_{m}}$ and that $b_{\sigma_{i}}(T-) = 0$ for every $i$.
%\todo[inline]{The proof that the boundaries converge to zero can be direct as in constant volatility selling paper, or by showing that the constant volatility value function $u_{\sigma_{1}}$ is an upper bound for $v_{1}$.}
\item 
Let us define $\mathcal{D}:= \{ (t,x, \sigma) \in [0,T]\times (l,h)\times \{\sigma_1, \ldots,\sigma_m\} \,:\, v(t,x, \sigma) = 1  \}$. Then $\tau_{\mathcal{D}} := \inf \{ s \geq 0 \,:\, (t+s, \hat X^{t,x, \sigma(t)}_{t+s}, \sigma(t+s)) \in \mathcal{D}\}$ is optimal for the problem \eqref{E:OSM} by \cite[Corollary 2.9]{PS06}. Lastly, from the fixed-point property \eqref{E:JFP} and Proposition \ref{T:OSTJ}, we conclude that $\tau^*=\tau_{\mathcal{D}}$, which finishes the proof.

%$\tau_{\mathcal{D}} := \inf \{s \geq 0: (t+s,\hat X^{t,x, \sigma}_{t+s}, \sigma(t+s) ) \in \mathcal{D}\}$.%Then, according to \cite[Corollary 2.9]{PS06} $\tau_{\mathcal{D}}$ is optimal for the problem. 
%Optimality of the stopping time $\tau^{*}$ follows from the identity \eqref{E:Civj}, Proposition \ref{T:OSTJ}, and the fixed-point property \eqref{E:JFP}.
\item  
Since $J^{(n)}_{i} 1 \nearrow v_{i}$ as $n \nearrow \infty$ and $J^{(n)}_{i} 1 \geq 1$ for all $n$, we have that $\lim_{n\nearrow \infty} b_{\sigma_{i}}^{g_i^{(n)}} \geq b_{\sigma_{i}}$. Also, if $x < \lim_{n\nearrow \infty} b_{\sigma_{i}}^{g_i^{(n)}} (t)$, then $J^{(n)}_{i} 1 (t,x) = 1$ for all $ n \in \N$ and so $v_{i}(t,x)= \lim_{n \nearrow \infty} J^{(n)}_{i} 1 (t,x)=1$. Hence, $\lim_{n\nearrow \infty} b_{\sigma_{i}}^{g_i^{(n)}} \leq b_{\sigma_{i}}$. As a result, $ \lim_{n\nearrow \infty} b_{\sigma_{i}}^{g_i^{(n)}} = b_{\sigma_{i}}$.
%\item The inequalities between boundaries is an easy consequence of Proposition \ref{T:vlim} (\ref{T:JOrder}) and part (\ref{T:bconv}) above. Since, according to part $4.$ of \cite[Theorem 4.10]{EV16}, $\check{b}_{\sigma_{1}}(T-)=0$

\item The free-boundary problem is a consequence of Proposition \ref{T:OpJf} (\ref{T:JfPDE}) and the fixed-point property \eqref{E:JFP}. 
\end{enumerate}
\end{proof}

\begin{remark}
Establishing uniqueness of a classical solution to a time non-homogeneous free-boundary problem is typically a technical task (see \cite{aP} for an example). Not being central to the mission of the paper, the uniqueness of solution to the free-boundary problems \eqref{E:SFBP} and \eqref{E:FFBP} has not been pursued. \end{remark}

\begin{remark}[A possible alternative approach]
It is worth pointing out that a potential alternative approach for the study of the value function and the optimal strategy is to directly analyse the variational inequality formulation (e.g., see~\cite[Section 5.2]{hP}) arising from the optimal stopping problem \eqref{E:OSM}. The coupled system of variational inequalities would need to be studied using weak solution techniques from the PDE theory (e.g., see \cite{BL, hP}) to obtain desired regularity and structural properties of the value function and the stopping region. Though the author is unaware of any work studying exactly this type of free-boundary problem directly in detail, there are available theoretical results \cite{C11} that include existence, uniqueness of viscosity solutions, and a comparison principle for the pricing of American options in regime-switching models. Also, under some conditions, convergence of stable, monotone, and consistent approximation schemes to the value function is shown. Suitable numerical PDE methods and their pros and cons for such a coupled system are discussed in \cite{H11}. 
%For existence and uniqueness of solutions to free-boundary problems of American option pricing in stochastic volatility models; see \cite{aP}. 
%Still, even though one could try studying the free-boundary problem directly to obtain the structural results of the paper, he/she will be faced with a difficult free-boundary problem. 
With this alternative route in mind (provided all the needed technical results can be established), our approach has clear benefits: avoiding many analytical complications that arise in the study of the full system (compare \cite{C11}) and yielding a very intuitive monotone approximation scheme for the value function and the stopping boundary. 
%Of course, for the optimal liquidation problem in models with continuous stochastic volatility, the treatment of a full three-dimensional free-boundary problem would be needed.
\end{remark}

For further study of the problem in this section, we will make a structural assumption about the Markov chain modelling the volatility.
\begin{assumption} \label{A:Skip-free}
The Markov chain $\sigma$ is \emph{skip-free}, i.e.~for all $i \in \{1, \ldots, m \}$, 
\begin{IEEEeqnarray*}{rCl}
\lambda_{ij}=0 \; \text{ if } j \notin \{i-1, i, i+1 \}.
\end{IEEEeqnarray*}
\end{assumption}
As many popular financial stochastic volatility models have continuous trajectories, and a skip-free Markov chain is a natural discrete state-space approximation of a continuous process, Assumption \ref{A:Skip-free} does not appear to be a severe restriction.

\begin{lemma} \label{L:Eur}
Let $\delta>0$, $g:(l,h)\times [0, \infty) \to [0,\infty) $ be increasing and convex in the first variable as well as decreasing in the second. Then $u : (l, h)\times \{ \sigma_{1}, \ldots, \sigma_{m}\} \to \R$ defined by
\begin{IEEEeqnarray}{rCl} \label{E:EurU}
u(x, \sigma_{i}) := \E \lt[ e^{\int_{0}^{\delta} \hat X^{x, \sigma_{i}}_{u} \ud u} g(\hat X^{x, \sigma}_{\delta}, \sigma(\delta))  \rt]
\end{IEEEeqnarray}
is increasing and convex in the first variable as well as decreasing in the second. 
%Moreover, 
%\begin{IEEEeqnarray}{rCl} \label{E:ubounds}
%\check{u}_{\sigma_{m}} \leq u \leq \check{u}_{\sigma_{1}},
%\end{IEEEeqnarray}
%where $\check{u}_\sigma: (l, h)  \to \R$ denotes the function as in \eqref{E:EurU}, but when the volatility process $t\mapsto \sigma(t)$ is a deterministic constant $\sigma$ for all $t$.
\end{lemma}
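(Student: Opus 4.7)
The lemma asserts three properties of $u(x,\sigma_i)$: monotonicity and convexity in $x$, and monotonicity in the volatility index. My plan is to handle the $x$-properties by conditioning on the $\sigma$-path and using SDE-based arguments analogous to Proposition~\ref{T:Ji}~(\ref{T:ic}), and to handle the $\sigma_i$-monotonicity via a pathwise monotone coupling of two copies of the Markov chain (available under Assumption~\ref{A:Skip-free}), combined with a convex-order comparison of the driven diffusions $\hat X$.

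\textbf{Monotonicity and convexity in $x$.} Conditional on the $\sigma$-path, $\hat X^{x,\sigma_i}$ solves an SDE with coefficients Lipschitz in $x$ on $(l,h)$, so the pathwise SDE comparison theorem gives $\hat X^{x_1,\sigma_i}_t \le \hat X^{x_2,\sigma_i}_t$ for $x_1 \le x_2$; together with $g$ being increasing in the first variable and the exponential being monotone, this yields that the conditional expectation is increasing in $x$, and integrating over $\sigma$ preserves monotonicity. For convexity I would imitate the Bermudan-approximation argument from Proposition~\ref{T:Ji}~(\ref{T:ic}): discretise $[0,\delta]$ into $2^n$ subintervals and define a backward recursion $u_n$ starting from $g(\cdot,\sigma(\delta))$ in which each step is an expectation of the kind treated by \cite[Theorem~5.1]{ET08}. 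A backward induction shows each $u_n(\cdot,\sigma_i)$ is convex and increasing in $x$; letting $n\to\infty$ and integrating over $\sigma$ preserves these properties, since convexity is closed under limits and expectations.

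\textbf{Monotonicity in $\sigma_i$.} Fix $i$ and prove $u(x,\sigma_i)\ge u(x,\sigma_{i+1})$; iterating in $i$ gives the general statement. Because $\sigma$ is skip-free by Assumption~\ref{A:Skip-free}, the standard argument for birth-death chains supplies a pathwise monotone coupling $(\sigma,\sigma')$ on a common probability space with $\sigma(0)=\sigma_i$, $\sigma'(0)=\sigma_{i+1}$, and $\sigma(s)\le\sigma'(s)$ for every $s\in[0,\delta]$ (run the two chains together until they coalesce, then evolve as one). Drive $\hat X^{x,\sigma}$ and $\hat X^{x,\sigma'}$ by the same Brownian motion $B$. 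Both share the state-dependent drift $(h-\hat X)(\hat X-l)$, which does not depend on $\sigma$, while the diffusion coefficient $\phi(\hat X,\sigma)=\sigma^{-1}(h-\hat X)(\hat X-l)$ is pointwise larger for the smaller $\sigma$. A convex-order comparison---obtained by imitating the same Bermudan recursion as above and invoking \cite[Theorem~5.1]{ET08} at each step---gives that, conditional on $(\sigma,\sigma')$, $\E[\Phi(\hat X^{x,\sigma})]\ge\E[\Phi(\hat X^{x,\sigma'})]$ for any functional $\Phi$ increasing and convex in the path. Applying this to $\Phi(y):=\exp\!\big(\int_0^\delta y_u\,\ud u\big)\,g(y_\delta,\sigma'(\delta))$ (increasing and convex in $y$ whenever $g$ is so in its first argument), and then using that $g$ is decreasing in its second argument together with $\sigma(\delta)\le\sigma'(\delta)$ under the coupling, yields the claim.

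\textbf{Main obstacle.} The main technical hurdle is making the convex-order comparison of $\hat X^{x,\sigma}$ versus $\hat X^{x,\sigma'}$ rigorous while carrying the creation-rate factor $\exp\!\big(\int_0^\delta \hat X_u\,\ud u\big)$, which entangles the terminal payoff with the entire trajectory. The cleanest route I see is the Bermudan discretisation sketched above: on each subinterval both $\sigma$'s are constant, so the comparison collapses to a one-step conditional expectation where \cite[Theorem~5.1]{ET08} and the pointwise inequality $\phi(\cdot,\sigma_i)\ge\phi(\cdot,\sigma_{i+1})$ can be applied directly; passing to the limit then reconstitutes the exponential creation-rate factor. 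A secondary subtlety is the explicit construction of the monotone coupling for a general skip-free generator, but this is classical for birth-death chains.
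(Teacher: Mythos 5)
Your proposal is correct and follows essentially the same route as the paper: a pathwise monotone coupling of the skip-free volatility chains, conditioning on the volatility trajectories, and the Ekstr\"om--Tysk convexity/comparison machinery, with the decay of $g$ in its second argument handling the terminal volatility value under the coupling. The only slip is that the one-step comparison between diffusions with pointwise-ordered diffusion coefficients is \cite[Theorem~6.1]{ET08}, not Theorem~5.1 (which only propagates convexity); note also that the paper applies \cite[Theorem~IX.3.7]{RY} and \cite[Theorems~5.1 and~6.1]{ET08} directly to the conditional, deterministic-volatility expectation, so the Bermudan discretisation you invoke to tame the creation-rate factor is not needed here, as those results are stated for precisely such discounted payoffs.
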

\begin{proof}
We will prove the claim using a coupling argument. Let $(\Omega', \F', \tilde{\P}')$ be a probability triplet supporting a Brownian motion $B$, and two volatility processes $\sigma^{1}$, $\sigma^{2}$ with the state space and transition densities as in \eqref{E:Stock}. In addition, we assume that $B$ is independent of $(\sigma^{1}, \sigma^{2})$, that the starting values satisfy  $\sigma^{1}(0) = \sigma_{i} \leq \sigma_{j} = \sigma^{2}(0)$, and that $\sigma^{1}(t)\leq \sigma^{2}(t)$ for all $t \geq 0$. Also, let $\hat X^{1}$ and $\hat X^{2}$ denote the solutions to \eqref{E:Xhat} when $\sigma$ is replaced by $\sigma^{1}$ and $\sigma^{2}$, respectively.

Let us fix an arbitrary $\omega_{0} \in \Omega'$. Since $\hat W$ is independent of $\sigma^{1}$,
\begin{IEEEeqnarray}{rCl} 
\tilde{E}' \lt[ e^{\int_{0}^{\delta} (\hat X^{1})^{x}_{u} \ud u} g((\hat X^{1})^{x}_{\delta}, \sigma^{1}(\delta))  \,|\, \F^{\sigma^{1}}_{\delta} \rt](\omega_{0})
&=& \tilde{\E}' \lt[ e^{\int_{0}^{\delta} (\tilde X^{1})^{x}_{u} \ud u} g((\tilde X^{1})^{x}_{\delta}, \sigma^{1}(\delta, \omega_{0}))\rt], \label{E:CVol}
\end{IEEEeqnarray}
where $\tilde{X}^{1}$ denotes the process $\hat X^{1}$ with the volatility process $\sigma^{1}$ replaced by a deterministic function $\sigma^{1}(\cdot, \omega_{0})$. Furthermore, the right-hand (and so the left-hand side) in \eqref{E:CVol} as a function of $x$ is increasing by \cite[Theorem~IX.3.7]{RY} as well as convex by  \cite[Theorem 5.1]{ET08}. Hence
\begin{IEEEeqnarray*}{rCl}
u(\cdot, \sigma_{i}) : x \mapsto  \tilde{\E}' \lt[  \tilde{\E}' \lt[ e^{\int_{0}^{\delta} (\hat X^{1})^{x}_{u} \ud u} g((\hat X^{1})^{x}_{\delta}, \sigma^{1}(\delta))  \,|\, \F^{\sigma^{1}}_{\delta} \rt] \rt]
\end{IEEEeqnarray*}
is increasing and convex. Next, we observe that
\begin{IEEEeqnarray}{rCl}
\IEEEeqnarraymulticol{3}{l}{
\tilde{\E}' \lt[ e^{\int_{0}^{\delta} (\hat X^{1})^{x}_{u} \ud u} g((\hat X^{1})^{x}_{\delta}, \sigma^{1}(\delta))  \,|\, \F^{\sigma^{1}, \sigma^{2}}_{\delta} \rt](\omega_{0})} \nonumber \\ 
 \quad &\geq&
\tilde{\E}' \lt[ e^{\int_{0}^{\delta}  (\hat X^{2})^{\delta}_{u} \ud u} g((\hat X^{2})^{x}_{\delta}, \sigma^{1}(\delta))  \,|\, \F^{\sigma^{1}, \sigma^{2} }_{\delta} \rt](\omega_{0}) \nonumber \\
&\geq& \tilde{\E}' \lt[ e^{\int_{0}^{\delta}  (\hat X^{2})^{x}_{u} \ud u} g( (\hat X^{2})^{x}_{\delta}, \sigma^{2}(\delta)) \,|\, \F^{\sigma^{1}, \sigma^{2}}_{\delta} \rt](\omega_{0}). \label{E:CEIneq}
\end{IEEEeqnarray}
In the above, having in mind that the conditional expectations can be rewritten as ordinary expectations similarly as in \eqref{E:CVol},  the first inequality followed by \cite[Theorem 6.1]{ET08}, the second by the decay of $g$ in the second variable. Integrating both sides of \eqref{E:CEIneq} over all possible $\omega_{0} \in \Omega'$ with respect to $\vd \P'$, we get that
\begin{IEEEeqnarray*}{rCl}
u(x, \sigma_{1}) \geq u(x, \sigma_{2}).
\end{IEEEeqnarray*}
Thus we can conclude that $u$ is increasing and convex  in the first variable as well as decreasing in the second. 
%Lastly, note that, by the decay of $g$ in the second variable and \cite[Theorem 6.1]{ET08}, the quantity on the right-hand side of \eqref{E:CVol} is contained in the interval $[\check{u}_{\sigma_{m}}(x), \check{u}_{\sigma_{1}}(x)]$ for every $\omega_{0} \in \Omega'$, yielding the inequalities $\check{u}_{\sigma_{m}} \leq u \leq \check{u}_{\sigma_{1}}$.
\end{proof}

\begin{theorem}[Ordering in volatility] \label{T:OrdVol}
\item \label{T:VIC}
\begin{enumerate}[(i)]
\item \label{T:vineq} $v$ is decreasing in the volatility variable, i.e. 
\begin{IEEEeqnarray*}{rCl}
%\check{v}_{\sigma_1} \geq 
v_{\sigma_{1}} \geq v_{\sigma_{2}} \geq \ldots \geq v_{\sigma_{m}} .
%\geq \check{v}_{\sigma_m}.
\end{IEEEeqnarray*} 
\item
The boundaries are ordered in volatility as
\[
%\check{b}_{\sigma_{1}} \leq 
b_{\sigma_{1}} \leq b_{\sigma_{2}} \leq \ldots \leq b_{\sigma_{m}}.
%\leq \check{b}_{\sigma_{m}}.
\]
\end{enumerate}
\end{theorem}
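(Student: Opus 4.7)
The plan is to first establish (i) and then derive (ii) as an easy consequence. For (i), my strategy is to approximate $v$ by Bermudan-style value functions on a fixed time grid, prove the ordering in $\sigma$ at each grid step by backward induction, and then pass to the limit. The key mechanism driving the induction will be Lemma~\ref{L:Eur}, whose hypotheses are tailor-made for this kind of propagation argument.

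Concretely, fix $t \in [0, T]$ and a partition $t = t_0 < t_1 < \cdots < t_N = T$ with mesh $\Delta$. Define the Bermudan value functions recursively by $w^\sigma_N(x) := 1$ and
\begin{IEEEeqnarray*}{rCl}
w^\sigma_k(x) &:=& \max\Big\{ 1,\ \tilde{\E}^{t_k, x, \sigma}\Big[ e^{\int_{t_k}^{t_{k+1}} \hat{X}_s\, \vd s}\, w^{\sigma(t_{k+1})}_{k+1}\big(\hat{X}_{t_{k+1}}\big)\Big] \Big\}.
\end{IEEEeqnarray*}
These are the values of the selling problem where stopping is restricted to the grid. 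A standard argument combining the bound $w^\sigma_0 \leq v(t, \cdot, \sigma)$, refinement-monotonicity, and the continuity of $v_\sigma$ from Theorem~\ref{T:vprops}(\ref{T:vcts}) yields pointwise convergence $w^\sigma_0 \to v(t, \cdot, \sigma)$ as $\Delta \to 0$.

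By backward induction on $k$, I claim that $w^\sigma_k(x)$ is (a) increasing and convex in $x$ and (b) decreasing in $\sigma$. The base case $k = N$ is immediate since $w^\sigma_N \equiv 1$. For the induction step, the payoff at the next grid point, $g(y, \sigma') := w^{\sigma'}_{k+1}(y)$, inherits from the induction hypothesis the properties of being increasing and convex in $y$ and decreasing in $\sigma'$. Under Assumption~\ref{A:Skip-free}, the monotone coupling of two volatility paths required by Lemma~\ref{L:Eur} exists, so applying that lemma with $\delta = t_{k+1} - t_k$ gives that the continuation expectation
\begin{IEEEeqnarray*}{rCl}
u^\sigma_k(x) &:=& \tilde{\E}^{t_k, x, \sigma}\Big[ e^{\int_{t_k}^{t_{k+1}} \hat{X}_s\, \vd s}\, g\big(\hat{X}_{t_{k+1}}, \sigma(t_{k+1})\big)\Big]
\end{IEEEeqnarray*}
is increasing and convex in $x$ and decreasing in $\sigma$. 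Since $w^\sigma_k = \max\{1, u^\sigma_k\}$ preserves these three properties, the induction closes. Letting $\Delta \to 0$ preserves the monotonicity in $\sigma$ in the pointwise limit, which yields $v_{\sigma_1} \geq \cdots \geq v_{\sigma_m}$ and establishes (i).

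For (ii), the fixed-point property \eqref{E:JFP} gives $v_i = J_i u_i$, so Theorem~\ref{T:boundaries}(\ref{T:b}) identifies the continuation region $\mathcal{C}^{u_i}_i = \{(t, x): x > b_{\sigma_i}(t)\}$ with $\{(t, x): v_i(t, x) > 1\}$. Part (i) then gives $\{v_j > 1\} \subseteq \{v_i > 1\}$ for $i < j$, which in terms of the boundaries reads $b_{\sigma_i}(t) \leq b_{\sigma_j}(t)$. The main obstacle is not the induction step itself (a clean plug-and-chug with Lemma~\ref{L:Eur}) but rather rigorously justifying the Bermudan convergence $w^\sigma_0 \to v(t, \cdot, \sigma)$; this is expected to follow from standard optimal-stopping arguments combined with the continuity of $v_\sigma$, but is the only genuinely delicate ingredient of the proof.
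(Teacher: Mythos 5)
Your proposal is correct and follows essentially the same route as the paper: a Bermudan approximation on a refining time grid, backward induction propagating monotonicity and convexity in $x$ together with decay in $\sigma$ via Lemma~\ref{L:Eur} (which is where Assumption~\ref{A:Skip-free} enters), passage to the pointwise limit for (i), and then reading off (ii) from the level-set characterisation of the boundaries. The paper likewise treats the convergence $v_n \to v$ as standard and does not dwell on it.
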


\begin{proof}
\begin{enumerate}[(i)]

\item We will prove the claim by approximating the value function $v$ by a sequence of value functions $\{v_{n}\}_{n\geq 0}$ of corresponding Bermudan optimal stopping problems. Let $v_{n}$ denote the value function as in \eqref{E:OSM}, but when stopping is allowed only at times $\lt\{ \frac{kT}{2^{n}} \,:\, k\in \{0,1, \ldots, 2^{n}\}\rt\}$.

Let us fix $n \in \N$. We will show that, for any given $k\in \{0, \ldots,2^n\}$ and any $t \in [\frac{k}{2^n}T, T]$, the value function $v_{n}(t,x, \sigma)$ is increasing and convex in $x$ as well as decreasing in $\sigma$ (note that here $\sigma$ denotes the initial value of the process $t\mapsto \sigma(t)$). The proof is by backwards induction from $k=2^n$ down to $k=0$. Since $v_{n}(T, \cdot, \cdot)=1$, the base step $k=2^n$ holds trivially. Now, suppose that, for some given $k \in \{0, \ldots, 2^n\}$, the value $v_{n}(t,x, \sigma)$ is increasing and convex in $x$ as well as decreasing in $\sigma$ for any $t\in [\frac{k}{2^n}T, T]$. Then, Lemma \ref{L:Eur} tells us that for any fixed $t \in [\frac{(k-1)T}{2^n}, \frac{kT}{2^n})$,
\begin{IEEEeqnarray*}{rCl}
f(t, x, \sigma) :=  \tilde{\E} \lt[ e^{\int_{t}^{\frac{kT}{2^n}} \hat X^{t,x, \sigma}_{u} \ud u} v_{n}\lt(\frac{kT}{2^n}, \hat X^{t, x, \sigma}_{\frac{kT}{2^n}}, \sigma \lt(\frac{kT}{2^n} \rt) \rt)  \rt],
\end{IEEEeqnarray*} 
is increasing and convex in $x$ as well as decreasing in $\sigma$. Consequently, since
\begin{IEEEeqnarray}{rCl} \label{E:wStep}
v_{n}(t,x, \sigma) &=& \lt\{ \begin{array}{ll}
f(t,x, \sigma), & t \in (\frac{(k-1)T}{2^n}, \frac{kT}{2^n}), \\
f(t,x, \sigma) \vee 1, & t = \frac{(k-1)T}{2^n}, 
\end{array} \rt.
\end{IEEEeqnarray}
the value $v_{n}(t,x, \sigma)$ is increasing and convex in $x$ as well as decreasing in $\sigma$ for any fixed $t \in [ \frac{k-1}{2^n}T, T]$. Hence, by backwards induction, $v_{n}$ is increasing and convex in the second argument $x$ as well as decreasing in the third argument $\sigma$. 
%In addition, note that the backwards induction argument above straightforwardly extends to yield $\check{v}^{(n)}_{\sigma_{m}} \leq v_{n} \leq \check{v}^{(n)}_{\sigma_{1}}$, where $\check{v}^{(n)}_{\sigma_{1}}$, $\check{v}^{(n)}_{\sigma_{2}}$ denote the corresponding Bermudan approximations of $\check{v}_{\sigma_{1}}$, $\check{v}_{\sigma_{m}}$, respectively.

Finally, since $v_{n} \to v$
%, $\check{v}^{(n)}_{\sigma_{i}} \to \check{v}_{\sigma_{i}}$
pointwise as $n \to \infty$, we can conclude that the value function $v$ is decreasing in $\sigma$.
% and that $\check{v}_{\sigma_{m}} \geq v \geq \check{v}_{\sigma_{1}}$.

\item From the proof of Theorem \ref{T:boundaries} (\ref{T:OSb}), the claim is a direct consequence of part (\ref{T:vineq}) above.
 
\end{enumerate}
\end{proof}

\begin{remark} \item
\begin{enumerate}
\item
The value function is decreasing in the initial volatility (Theorem \ref{T:OrdVol} (\ref{T:vineq})) also when the volatility is any continuous time-homogeneous positive Markov process independent of the driving Brownian motion $W$. The assertion is justified by inspection of the proof of Lemma \ref{L:Eur} in which no crossing of the volatility trajectories was important, not the Markov chain structure. 

\item 
Though there are no grounds to believe that any of the boundaries $b_{\sigma_{1}}, \ldots ,b_{\sigma_{m}}$ is discontinuous, proving their continuity, except for the lowest one, is beyond the power of customary techniques. Continuity of the lowest boundary can be proved similarly as in the proof of part 4 of \cite[Theorem 3.10]{EV16}, exploiting the ordering of the boundaries. The stumbling block for proving continuity of the upper boundaries is that, at a downward volatility jump time, the value function has a positive jump whose magnitude is difficult to quantify.
\end{enumerate}
\end{remark}

\section{Generalisation to an arbitrary prior}

In this section, we generalise most results of the earlier parts to the general prior case. In what follows, the prior $\mu$ of the drift is no longer a two-point but an arbitrary probability distribution. 
%\todo[inline]{Think through.} 

%\subsection{Skip-free Markov chain volatility}

%Although for the ease of readability so far we have focused on the two-state regime switching volatility model, all the arguments and results presented in Section  \ref{S:PF} - Section \ref{S:VFSS} generalise to an $n$-state Markov chain volatility case. The  The main difference is that Proposition \ref{T:Jchain}, which facilitates the approximation of the regime switching volatility problem by a sequence of constant volatility problems,  becomes
%\begin{IEEEeqnarray}{rCl}
%J^{(n+1)}_{i} = J_{i} \lt( - \sum_{j \neq i} \frac{\lambda_{ij}}{\lambda_{ii}}  J^{(n)}_{j} \rt).
%\end{IEEEeqnarray}
%
%\comment[inline]{What can be said about the model with Heston volatility based on our MC model? Answer: In a skip-free Markov chain model with $n$ states the boundaries will be ordered in volatility. We can also visualise the stopping strategy as a stopping surface in $(t, x, \sigma)$-space. Indeed, in a MC model, we would have stairs going up in the increasing $\sigma$-direction. Since we can approximate any stochastic volatility by a skip-free MC, this tells us, that for any continuous stochastic volatility, the stopping surface will be monotone increasing in $\sigma$ as well as decreasing in $t$, which is a very nice solution for a high-dimensional optimal stopping problem.}
%
%%\subsubsection{Continuous stochastic volatility models}

\subsection{Two-dimensional characterisation of the posterior distribution}

\label{ss:2D-post}
Let us first think a bit more abstractly to develop intuition for the arbitrary prior case. According to the Kushner-Stratonivich stochastic partial differential equation (SPDE) for the posterior distribution (see Section 3.2 of \cite{CR}), if we take the innovation process driving the SPDE and the volatility as the available information sources, then the posterior distribution is a measure-valued Markov process. Unfortunately, there does not exist any applicable general methods to solve optimal stopping problems for measure-valued stochastic processes. If only we were able to characterise the posterior distribution process by an $\R^n$-valued Markovian process (with respect to the filtration generated by the innovation and the volatility processes), then we should manage to reduce our optimal stopping problem with a stochastic measure-valued underlying to an optimal stopping problem with a $\R^n$-valued Markovian underlying. Mercifully, this wishful thinking turns out to be possible in reality as we shall soon see.
 
%Suppose the prior for the drift is an arbitrary probability distribution $\mu$. 
Unlike in the problem with constant volatility studied in \cite{EV16}, when the volatility is varying, the pair consisting of the elapsed time $t$ and the posterior mean $\hat X_{t}$ is not sufficient (with an exception of the two-point prior case studied before) to characterise the posterior distribution $\mu_{t}$ of $X$ given $\F^{S, \sigma}_{t}$. Hence we need some additional information to describe the posterior distribution. Quite surprisingly, all this needed additional information can be captured in a single additional observable statistic which we will name the `effective learning time'. We start the development by first introducing some useful notation.

Define $Y^{(i)}_t:=Xt + \sigma_{i} W_{t}$ and let $\mu^{(i)}_{t,y}$ denote the posterior distribution of $X$ at time $t$ given $Y^{(i)}_{t}=y$. It needs to be mentioned that, for any given prior $\mu$, the distributions of $X$ given $\F^{Y^{(i)}}_{t}$ and $X$ given $Y^{(i)}_{t}$ are equal (see Proposition 3.1~in \cite{EV16}), which justifies our conditioning only on the last value $Y^{(i)}_{t}$. Also, recall that $l= \inf \supp (\mu)$, $h = \sup \supp (\mu)$.

The next lemma provides the key insight allowing to characterise the posterior distribution by only two parameters.
\begin{lemma} \label{T:EqPost}
Let $\sigma_{2} \geq \sigma_{1} > 0$. Then  
\[
\{\mu^{(1)}_{t,y} \,:\, t  > 0, \, y \in \R \}= \{ \mu^{(2)}_{t,y}\,:\, t  > 0, \, y \in \R  \},
\]
i.e.~the sets of possible conditional distributions of $X$ in both cases are the same.
\end{lemma}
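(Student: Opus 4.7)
\medskip

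\noindent\textbf{Proof proposal.} The plan is to compute the posterior explicitly via Bayes' rule and observe that, up to reparametrization, the family of posteriors does not depend on the volatility level. Since $Y^{(i)}_t = Xt + \sigma_i W_t$ and $W_t$ is independent of $X$, the conditional distribution of $Y^{(i)}_t$ given $X = x$ is $\mathcal{N}(xt, \sigma_i^2 t)$. Hence by Bayes' rule
\begin{equation*}
\mu^{(i)}_{t,y}(\vd x) \;=\; \frac{\exp\!\lt(-\frac{(y-xt)^2}{2\sigma_i^2 t}\rt)\,\mu(\vd x)}{\int \exp\!\lt(-\frac{(y-x't)^2}{2\sigma_i^2 t}\rt)\,\mu(\vd x')}.
\end{equation*}

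Expanding $(y-xt)^2 = y^2 - 2xyt + x^2 t^2$ and cancelling the factor $\exp(-y^2/(2\sigma_i^2 t))$ between numerator and denominator, the posterior simplifies to
\begin{equation*}
\mu^{(i)}_{t,y}(\vd x) \;=\; \frac{\exp\!\lt(\frac{y}{\sigma_i^2}\,x - \frac{t}{2\sigma_i^2}\,x^2\rt)\,\mu(\vd x)}{\int \exp\!\lt(\frac{y}{\sigma_i^2}\,x' - \frac{t}{2\sigma_i^2}\,x'^2\rt)\,\mu(\vd x')}.
\end{equation*}
Thus $\mu^{(i)}_{t,y}$ depends on $(t,y,\sigma_i)$ only through the pair $(a_i, b_i) := (y/\sigma_i^2,\, t/\sigma_i^2)$.

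Now I would conclude by a straightforward surjection argument. As $(t,y)$ ranges over $(0,\infty)\times\R$, the pair $(a_i, b_i)$ ranges over all of $(0,\infty)\times \R$, and this range is the same for $i=1$ and $i=2$. Concretely, given any $(t,y) \in (0,\infty) \times \R$, choosing $t' := t\sigma_2^2/\sigma_1^2$ and $y' := y\sigma_2^2/\sigma_1^2$ yields $(a_2, b_2) = (a_1, b_1)$, whence $\mu^{(2)}_{t',y'} = \mu^{(1)}_{t,y}$; the reverse inclusion is symmetric. This gives the claimed equality of sets.

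There is no real obstacle here: the only thing to be slightly careful about is the cancellation of the $x$-independent factor in the Bayesian formula, which requires the normalising integral to be finite for each $(t,y)$. This follows from the Gaussian decay in $x$ combined with boundedness or at worst mild growth on $\supp(\mu)$; in the setting of the paper $\supp(\mu) \subseteq [l,h]$ is bounded, so the integrand is bounded and integrability is immediate.
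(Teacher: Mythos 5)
Your proof is correct and takes essentially the same route as the paper: both write the posterior via Bayes' rule as $\mu^{(i)}_{t,y}(\vd u) \propto e^{(2uy - u^2 t)/(2\sigma_i^2)}\,\mu(\vd u)$ and observe that this exponent depends on $(t,y,\sigma_i)$ only through $(t/\sigma_i^2,\, y/\sigma_i^2)$, so the scaling $(t,y)\mapsto (t\sigma_2^2/\sigma_1^2,\, y\sigma_2^2/\sigma_1^2)$ identifies the two families of posteriors. One minor remark: this lemma sits in the arbitrary-prior section, where $\supp(\mu)$ need not be bounded, but your integrability concern is resolved in full generality because for $t>0$ the exponent tends to $-\infty$ quadratically in $u$, so the integrand is bounded and the normalising integral is finite for any probability measure $\mu$.
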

\begin{proof}
Let $t>0$, $y \in \R$. By the standard filtering theory (a generalised Bayes' rule),
\begin{IEEEeqnarray}{rCl}
\label{E:mu}
\mu^{(i)}_{t,y}(\vd u):=\frac{e^{\frac{2uy-u^2t}{2\sigma_{i}^2}} \mu(\vd u)}{\int_\R e^{\frac{2uy-u^2t}{2\sigma_{i}^2}} \mu(\vd u)}.
\end{IEEEeqnarray}
Then taking $r = \lt( \frac{\sigma_{1}}{\sigma_{2}} \rt)^{2}t$ and $y_{1} = \lt( \frac{\sigma_{1}}{\sigma_{2}} \rt)^{2} y$, we have that
\begin{IEEEeqnarray*}{rCl}
\mu^{(2)}_{t,y}(\vd u) &=& \mu^{(1)}_{r ,y_{1}}(\vd u).
\end{IEEEeqnarray*}
\end{proof}
From Lemma \ref{T:EqPost} and \cite[Lemma 3.3]{EV16} we obtain the following important corollary, telling us that, having fixed a prior, any possible posterior distribution can be fully characterised by only two parameters. 

\begin{corollary} \label{T:EqDistr}
Let $t >0$. Then, for any posterior distribution $\mu_{t}(\cdot) = \P(X \in \cdot \,|\, \F^{S, \sigma}_{t})(\omega)$, there exists $(r, x) \in (0,T]\times (l,h)$ such that $\mu_{t}= \mu^{(1)}_{r,y_{1}(r,x)}$, where $y_{1}(r,x)$ is defined as the unique value satisfying $\E[X \,|\, Y^{(1)}_{r}=y_{1}(r,x) ] =x$. In particular, we can take $r = \int_0^t \lt( \frac{\sigma_1}{\sigma(u)(\omega)}\rt)^2 \ud u$ and $y_1(r,x) = \int_0^t \lt( \frac{\sigma_1}{\sigma(u)(\omega)}\rt)^2 \ud Y_u(\omega)$, where $Y_u = \log(S_u) + \frac{1}{2}\int_0^u \sigma(b)^2 \ud b$.
\end{corollary}

When the volatility varies, so does the speed of learning about the drift. The corollary tells us that we can interpret $r$ as the \emph{effective learning time} measured under the constant volatility $\sigma_{1}$. The intuition for the name is that even though the volatility is varying over time, the same posterior distribution $\mu_t$ can be also be obtained in a constant volatility model with the constant volatility $\sigma_1$, just at a different time $r$ and at a different value of the price $S$.

\begin{remark}
It is worth remarking that Corollary \ref{T:EqDistr} also holds for any reasonable positive volatility process. Indeed, using the Kallianpur-Striebel formula with time-dependent volatility (see Theorem 2.9 on page 39 of \cite{CR}), the proof of Lemma \ref{T:EqPost} equally applies for an arbitrary positive time-dependent volatility and immediately yields the result of the corollary. \end{remark}

Next, we make a convenient technical assumption about the prior distribution $\mu$.\begin{assumption}
\label{assump}
The prior distribution $\mu$ is such that
\begin{enumerate}
\item
$
 \int_\R e^{a u^2}\mu(\vd u)<\infty 
$ for some $a>0$,
\item $\psi(\cdot,\cdot) :[0,T]\times (l,h) \to \R$ defined by
\[ \psi(t,x) := \frac{1}{\sigma_{1}} \lt( \E[X^{2}\,|\, Y^{1}_{t}=y_{1}(t,x)] - x^{2} \rt) = \frac{1}{\sigma_{1}} \Var \lt( X\,|\, Y^{1}_{t}= y_{1}(t,x) \rt) \]
is a bounded function that is Lipschitz continuous in the second variable.
\end{enumerate}
\end{assumption}
In particular, all compactly supported distributions as well as the normal distribution are known to satisfy Assumption \ref{assump} (see \cite{EV16}), so it is an inconsequential restriction for practical applications. 

\subsection{Markovian embedding}
Similarly as in the two-point prior case, we will study the optimal stopping problem \eqref{E:OSN2} by embedding it into a Markovian framework. With Corollary \ref{T:EqDistr} telling us that the effective learning time $r$ and the posterior mean $x$  fully characterise the posterior distribution, now, we can embed the optimal stopping problem \eqref{E:OSN2} into the standard Markovian framework by defining the Markovian value function
\begin{IEEEeqnarray*}{rCl} \label{E:GMVF}
v(t,x, r, \sigma) &:=& \sup_{\tau \in \mathcal{T}_{T-t}} \tilde{\E} \lt[ e^{ \int_0^\tau  \hat X^{t, x, r, \sigma}_{t+s} \ud s }\rt], \, (t, x,r, \sigma) \in [0,T] \times (l, h) \times [0,T] \times \{ \sigma_{1},\ldots, \sigma_{m} \}. \\
{}  \label{E:MV} \IEEEyesnumber 
\end{IEEEeqnarray*}

Here the process $\hat X= \hat X^{t,x, r, \sigma_{i}}$ evolves according to
\begin{IEEEeqnarray}{rCl} \label{E:Bhm}
\lt\{ \begin{array}{ll}
\vd \hat X_{t+s} = \sigma_{1} \psi(r_{t+s}, \hat X_{t+s}) \ud s + \frac{\sigma_{1}}{\sigma(t+s)}\psi(r_{t+s}, \hat X_{t+s}) \ud B_{t+s}, & \quad s \geq 0, \\
\vd  r_{t+s} = \lt(\frac{\sigma_{1}}{\sigma(t+s)}\rt)^{2} \vd s, & \quad s \geq 0, \\
\hat X_t = x,\\
r_t = r, \\
\sigma(t)= \sigma_{i};
\end{array} \rt.
\end{IEEEeqnarray} 
the given dynamics of $\hat X$ is a consequence of Corollary \ref{T:EqDistr} and the evolution equation of $\hat X$ in the constant volatility case (see the equation (3.9) in \cite{EV16}).
%where $\psi$ is given by
%\[ \psi(t,x) = \frac{1}{\sigma_{1}} \lt( \tilde{\E}[X^{2}\,|\, Y^{1}_{t}=y_{1}(t,x)] - x^{2} \rt) = \frac{1}{\sigma_{1}} \Var \lt( X\,|\, Y^{1}_{t}= y_{1}(t,x) \rt). \]
Also, in \eqref{E:Bhm}, the process $B_{t}= \int_{0}^{t} \sigma(u) \ud u + \hat W_{t}$ is a $\tilde{\P}$-Brownian motion. Lastly, in \eqref{E:MV},
$\mathcal T_{T-t}$ denotes the set of stopping times less than or equal to $T-t$ with respect to the usual augmentation of the filtration generated by $\{ \hat X^{t, x, r, \sigma_{i}}_{t+s}\}_{s \geq 0}$ and $\{ \sigma(t+s)\}_{s \geq 0}$. 

\begin{remark}
Let us note that in light of the observations of Section \ref{ss:2D-post}, if the regime-switching volatility was replaced by a different stochastic volatility process,  the same Markovian embedding \eqref{E:GMVF} could still be useful for the study of the altered problem.    
\end{remark}

\subsection{Outline of the approximation procedure and main results}

Under an arbitrary prior, the approximation procedure of Section \ref{S:Appr} can also be applied, however, the operators $J$ and $J^{(n)}$ need to be redefined in a suitable way. We redefine the operator $J$ to act on a function $f:[0, T] \times (l,h) \times [0,T] \to \R$ as 
\begin{IEEEeqnarray}{rCl} 
\IEEEeqnarraymulticol{3}{l}{
(J f)(t, x, r, \sigma_{i})
} \nonumber \\
  &:=& \sup_{\tau \in \mathcal{T}_{T-t}} \tilde{\E} \lt[ e^{ \int_0^\tau  \hat X^{t, x, r, \sigma_{i}}_{t+s} \ud s } \Ind_{\{\tau < \eta^{t}_{i} \}} + e^{ \int_0^{\eta^{t}_{i}}  \hat X^{t, x, r, \sigma_{i}}_{t+s}  \ud s }f\lt(t+\eta^{t}_{i}, \hat X^{t, x, r, \sigma_{i}}_{t+\eta^{t}_{i}}, r^{t,r}_{t+\eta^{t}_{i}}\rt) \Ind_{\{\tau \geq \eta^{t}_{i}\}}  \rt] \nonumber \\
&=& \sup_{\tau \in \mathcal{T}_{T-t}} \tilde{\E} \lt[ e^{ \int_0^\tau  \hat X^{t, x, r, \sigma_{i}}_{t+s} -\lambda_{i} \ud s } + \int_{0}^{\tau} e^{ \int_0^u  \hat X^{t, x, r, \sigma_{i}}_{t+s} -\lambda_{i} \ud s }f \lt(t+u, \hat X^{t, x, r, \sigma_{i}}_{t+u}, r^{t,r}_{t+\eta^{t}_{i}} \rt) \ud u \rt]  \label{E:J_iGen}
\end{IEEEeqnarray}
and then the operator $J_{i}$ as $J_{i} f := (Jf)(\cdot, \cdot, \sigma_{i})$. Intuitively, $(J_{i} f)$ represents a Markovian value function corresponding to optimal stopping before $t+ \eta^{t}_{i}$, i.e.~before the first volatility change after $t$, when, at time $t+\eta^{t}_{i} < T$, the payoff $f\lt(t+\eta^{t}_{i}, \hat X^{{t, x, r, \sigma_{i}}}_{t+\eta^{t}_{i}}, r^{t,r}_{t+\eta^{t}_{i}} \rt)$ is received, provided stopping has not occurred yet. The underlying process in the optimal stopping problem $J_{i} f$ is the diffusion $(t, \hat X_{t}, r_{t})$. 

The majority of the results in Sections \ref{S:Appr} and \ref{S:VFSS} generalise nicely  to an arbitrary prior case. Proposition \ref{T:Ji} extends word by word; the proofs are analogous, just the second property of $\psi$ from \cite[Proposition 3.6]{EV16} needs to be used for Proposition \ref{T:Ji} (\ref{T:ic}). In addition,  we have that
$f$ decreasing in $r$ implies that $J_{i}f$ is decreasing in $r$, which is proved by a Bermudan approximation argument as in Proposition \ref{T:Ji} (\ref{T:ic}) using the time decay of $\psi$ from \cite[Proposition 3.6]{EV16}. As a result, for $f :[0,T] \times  (l,h) \times [0,T]  \to \R$ that is decreasing in the first and third variables as well as increasing (though not too fast as $x \nearrow \infty$) and convex in the second, there exists a function (a stopping boundary) $b^{f}_{\sigma_{i}} : [0,T)\times [0, T)  \to [l,0]$ that is increasing in both variables and such that the continuation region $  \mathcal{C}^{f}_{i} := \{  (t, x, r) \in [0,T) \times (l,h) \times \big[0, T \big)  \,:\, (J_{i}f) (t,x, r) > 1 \}$ (optimality shown as in Proposition \ref{T:OSTJ}) satisfies
\begin{IEEEeqnarray*}{rCl}
 \mathcal{C}^{f}_{i} &=& \{ (t, x, r) \in [0,T)  \times (l,h) \times \lt[0, T  \rt) \,:\, x > b^{f}_{\sigma_{i}}(t,r) \}.
\end{IEEEeqnarray*}
%Thus the stopping time
%\begin{IEEEeqnarray*}{rCl} 
%\tau^{f}_{\sigma_{i}}(t,r,x) &:=& \inf \{ u \in [0, T-t] \,:\, (t+u , r + (\sigma_{1}/\sigma_{i})^{2} u,  \hat X^{t,x, \sigma_{i}}_{t+u}) \notin \mathcal{C}^{f}_{i} \} 
%\end{IEEEeqnarray*} 
%is optimal for the problem \eqref{E:J_iGen}. 
In addition, each pair  $(J_{i}f,b_{\sigma_{i}}^{f})$ solves the free-boundary problem
\begin{IEEEeqnarray*}{rCl} %\label{E:FBGen}
\lt\{
\begin{array}{ll}
 \pt_{t}u(t,x,r) + \lt(\frac{\sigma_{1}}{\sigma_{i}} \rt)^{2} \pt_{r} u(t,x,r) + {\sigma_{1}} \psi(r, x) \pt_{x} u(t,x,r) \\+ \frac{1}{2} \lt(\frac{\sigma_{1}}{\sigma_{i}}\rt)^{2}\psi(r, x)^{2} \pt_{xx} u(t,x,r) +(x-\lambda_{i})u(t,x,r)+\lambda_{i} f(t,x,r) = 0,  \text{ if }x > b_{\sigma_{i}}^{f}(t,r), &\\
u(t,x,r) = 1, \text{ if } x \leq b_{\sigma_{i}}^{f}(t,r) \text{ or } t = T. &
\end{array}
\rt.
\end{IEEEeqnarray*}

With the operator $J^{(n)}$ redefined as
\begin{IEEEeqnarray*}{rCl}
(J^{(n)} f)(t,x, r, \sigma_{i}) &:=& \sup_{\tau \in \mathcal{T}_{T-t}} \tilde{\E} \bigg[ e^{\int_0^\tau  \hat X^{t,x, r, \sigma_i}_{t+s} \ud s } \Ind_{\{ \tau < \xi^{t}_{n} \}} \\
&&+ e^{\int_0^{\xi^{t}_{n}}  \hat X^{t, x, r, \sigma_i}_{t+s} \ud s }f(t+\xi^{t}_{n}, \hat X^{t,x, r, \sigma_i}_{t+\xi^{t}_{n}}, r^{t,r}_{t +\xi^{t}_{n}}) \Ind_{\{\tau \geq \xi^{t}_{n}\}} \bigg],
\end{IEEEeqnarray*}
the crucial Proposition \ref{T:Jchain} holds word by word.
%\todo{Think through the proof in detail, especially, where time-homogeneity is used.}. 
Furthermore, the sequence of functions
 $\{ J^{(n)}
  1 \}_{n \geq 0}$ is increasing, bounded from below by $1$ with each $J^{(n)} 1 $ being decreasing in the first and third variables as well as increasing and convex in the second variable $x$. As desired,
\[
J^{(n)} 1 \nearrow v \quad \text{ pointwise as } n \nearrow \infty,
\]
so the value function $v$ is decreasing in the first and third variables as well as increasing and convex in the second variable; again, $v$ is a fixed point of $\tilde J$. Moreover, the uniform approximation error result \eqref{E:ApprErr} also holds for compactly supported priors (with an obvious reinterpretation $h = \sup (\supp \mu)$). We can also show (by a similar argument as in Theorem \ref{T:boundaries} (\ref{T:bconv})) that
\begin{IEEEeqnarray*}{rCl}
b_{\sigma_{i}}^{g_i^{(n)}} \searrow b_{\sigma_{i}} \quad \text{pointwise as } n \nearrow \infty,
\end{IEEEeqnarray*}
where $g_i^{(n)} := \sum_{j\neq i} \frac{\lambda_{ij}}{\lambda_{i}}J^{(n)}_{j}1$ and the limit $b_{\sigma_{i}}$ is a function increasing in both variables. Lastly, by similar arguments as before, the stopping time
\begin{IEEEeqnarray*}{rCl}
\tau^{*}= \inf \{ s \in [0, T-t) \,:\, \hat X^{t,x,r,\sigma}_{t+s} \leq b_{\sigma(t+s)}(t+s, r_{t+s})  \} \wedge (T-t)\,
\end{IEEEeqnarray*}
is optimal for the liquidation problem \eqref{E:OSN2}.

\begin{remark} The higher volatility, the slower learning about the drift, so under Assumption \ref{A:Skip-free} it is tempting to expect that the value function $v$ is decreasing in the volatility variable and so the stopping boundaries $b_{\sigma_{1}} \leq b_{\sigma_{2}} \leq \ldots \leq b_{\sigma_{m}}$ also in the case of an arbitrary prior distribution $\mu$. Regrettably, proving (or disproving) such monotonicity in volatility has not been achieved by the author.

%Still, it might be worth mentioning that a potential proof strategy would be to show by induction, exploiting \eqref{E:FBGen} and \eqref{E:Chaining} at the induction step, that $\{J^{(n)}_{i} 1\}_{1\leq i \leq m}$ are ordered in volatility for every $n$, which would yield the ordering of value functions in the limit. 
\end{remark}

%%%%%%%%%%%%%%%%%%%%
%%%-BIBLIOGRAPHY
%%%%%%%%%%%%%%%%%%%

\newpage
\bibliographystyle{plain}

\end{document}